\newcommand\blfootnote[1]{%
  \begingroup
  \renewcommand\thefootnote{}\footnote{#1}%
  \addtocounter{footnote}{-1}%
  \endgroup
}
\newtheorem{theorem}{Theorem}
\title{A Markov Decision Process Framework for Efficient and Implementable Contact Tracing and Isolation}
\author{
 George Z. Li$^{2,*}$\quad
 Arash Haddadan$^{1}$\quad
 Ann Li$^{1}$\quad 
 Madhav Marathe$^{1,3}$ \and
 Aravind Srinivasan$^{2}$ \quad
 Anil Vulikanti$^{1,3}$ \quad
 Zeyu Zhao$^{2}$
}
\date{} 
\newcommand{\overall}{\textsc{MinTotalInf}}
\newcommand{\prob}{\textsc{MinExposed}}
\newcommand{\ALG}{\textsc{DepRound}}
\newcommand{\Greedy}{\textsc{DegGreedy}}
\newcommand{\Seg}{\textsc{SegDegree}}
\newcommand{\MinExposed}{\textsc{MinExposed}}
\newcommand{\DepRound}{\textsc{DepRound}}
\newcommand{\DegGreedy}{\textsc{DegGreedy}}
\newcommand{\SegDegree}{\textsc{SegDegree}}
\begin{document}

\maketitle

\blfootnote{$^1$Biocomplexity Institute and Initiative, University of Virginia}
\blfootnote{$^2$Department of Computer Science, University of Maryland}
\blfootnote{$^3$Department of Computer Science, University of Virginia}
\blfootnote{$^*$Correspondence to George Li at gzli929@gmail.com}

\begin{abstract}
Efficient contact tracing and isolation is an effective strategy to control epidemics. It was used effectively during the Ebola epidemic and successfully implemented in several parts of the world during the ongoing COVID-19 pandemic. An important consideration in contact tracing is the budget on the number of individuals asked to quarantine---the budget is limited for socioeconomic reasons. In this paper, we present a Markov Decision Process (MDP) framework to formulate the problem of using contact tracing to reduce the size of an outbreak while asking a limited number of people to quarantine. We formulate each step of the MDP as a combinatorial problem, $\prob{}$, which we demonstrate is NP-Hard; as a result, we develop an LP-based approximation algorithm. Though this algorithm directly solves \prob{}, it is often impractical in the real world due to information constraints. To this end, we develop a greedy approach based on insights from the analysis of the previous algorithm, which we show is more interpretable. A key feature of the greedy algorithm is that it does not need complete information of the underlying social contact network. This makes the heuristic implementable in practice and is an important consideration. Finally, we carry out experiments on simulations of the MDP run on real-world networks, and show how the algorithms can help in {\em bending} the epidemic curve while limiting the number of isolated individuals. Our experimental results demonstrate that the greedy algorithm and its variants are especially effective, robust, and practical in a variety of realistic scenarios, such as when the contact graph and specific transmission probabilities are not known. All code can be found in our GitHub repository: https://github.com/gzli929/ContactTracing.

\end{abstract}

\section{Introduction}\label{sec:introduction}

Contact tracing followed by isolation is one of the most effective ways to control epidemics caused by infectious diseases. In this intervention strategy, contact tracers ask infected individuals to report their recent contacts; they then trace these contacts, requesting them to isolate for a certain period of time~\citep{Kretzschmar}. The role of contact tracing during the Ebola, measles, and COVID-19 outbreaks is well-documented ~\citep{Keeling,Kretzschmar,Liu2015}. However, its effectiveness is dependent on the accuracy and quantity of information on the contacts, the speed at which tracing is conducted, and the compliance of individuals in self-isolating. Recently, technologies such as the Google-Apple app ~\citep{ahmed:ieee20} have provided a solution to augment human contact tracers. When contact tracing apps are used, the strategy is called \emph{digital} contact tracing; otherwise, it is called \emph{manual} contact tracing. Our algorithms and heuristics will be applicable to both manual contact tracing and digital contact tracing.

A main limitation of contact tracing is the number of individuals who can be asked to isolate; this number is constrained since isolation imposes a significant economic and social burden to the population. For manual contact tracing, the budget is also dependent on the economic cost of hiring contact tracers. From these constraints, we can see a clear trade-off between reducing infection spread and minimizing socioeconomic costs. This brings forth a natural question that we study: which individuals should we ask to isolate in order to make the most effective use of the budget for contact tracing?

In addition to constraints on the number of individuals who are isolated, we also address the practical challenges of contact tracing. Most notably, contact tracing graphs and their associated transmission probabilities are noisy, sparse, and dynamic \citep{liu2020using, sayampanathan2021infectivity}. Motivated by this, we seek robust algorithms that can deal with such uncertainties. Additionally, we need to consider the simplicity and utility of such algorithms to encourage widespread use. These factors motivate us to develop simple but effective heuristics for our problem~\citep{russell2002artificial,yadav2016using}.

\textbf{Our contributions.}
We use a Markov Decision Process (MDP) framework to formulate the problem of efficient contact tracing that reduces the size of the outbreak using a limited number of contact tracers (see Section \ref{sec:prelim}). The basic setup is as follows: let $G=(V,E)$ be the social contact network and let the disease spread on $G$ by an SIR type diffusion process~\citep{marathe:cacm13}. At each timestep $t$, we assume the policymaker knows the infected set. Constrained by the number of contact tracers, the policymaker wants to choose a set of nodes that minimizes the total number of infections at the end of the epidemic when asked to quarantine. We call this problem \overall{}. Since the disease dynamics are constantly changing (due to fluctuating attitudes and behavior), we will only consider finite time horizons of the MDP by solving the problem, \prob{}, which focuses on the second neighborhood of the infected set.

\begin{itemize}[topsep=0pt]
\setlength\itemsep{-1pt}
    \item We prove that \prob{} is NP-Hard (see Section \ref{sec:hard}).
    Given the hardness result, we develop an LP-based algorithm for \prob{}, proving rigorous approximation guarantees. Using insights from the analysis of the LP-based algorithm, we introduce a greedy approximation algorithm, which is interpretable and practical (see {Section \ref{ext}}).
    \item While maintaining the theoretical properties of our algorithms, we show that we can incorporate \emph{fairness guarantees}, ensuring no demographic group is disproportionately affected by contact tracing or the disease. Furthermore, we experimentally verify that incorporating these fairness constraints is possible while \emph{not degrading solution quality much} ({Sections \ref{fair} and \ref{sec:price-fairness}}).
    \item Our provable approximation algorithms require knowledge of the (local) contact graph, transmission rates, and compliance rates, which is unrealistic in practice. We draw on the intuition gained from our theoretical results to devise heuristics which requires minimal information of the contact graph or disease model---and includes differential privacy for user privacy---and thus can be made operational in the real world (see {Section \ref{prac}}).
    \item We run simulations of an epidemic with realistic contact network
    and parameter values to assess the performance of our algorithms and  heuristics. The results suggest that the heuristics perform well even under the limited information model (see {Section \ref{exp}}).
\end{itemize}

\section{Related Work}
Manual contact tracing is a widely used strategy that has been successful in controlling past outbreaks;
see \citet{armbruster2007contact, armbruster2007who,eames2007contact,kiss2005disease,kiss2008effect} for a discussion on contact tracing, its effectiveness, and mathematical models to study contact tracing in networks. Recently, digital contact tracing has emerged as another powerful technology to control outbreaks, especially evident in the COVID-19 pandemic \citep{ahmed:ieee20,salathe2020covid,lorch2020spatiotemporal}. Though the importance of contact tracing is well studied, we are the first to view it as an algorithmic problem and give provable guarantees to our methods. Moreover, we are the first to address fairness concerns in quarantine decisions.

Our paper adds to a line of work developing theoretical models for intervention problems in networked epidemic processes; however, prior works have only considered these problems in idealized settings. For instance,~\citep{ekmsw-2006, kempe:esa05,sambaturu2020designing, Minutoli2020scalable} consider problems of optimizing interventions such as vaccination and social distancing in a non-adaptive and complete information setting, where the intervention is only done at the start of the epidemic. In contrast, our paper focuses on the more realistic contact tracing problem, in which decisions need to be made at each timestep. Moreover, we only assume knowledge of a local neighborhood of the currently infected nodes, which is more realistically available.

Concurrent to our work, \citet{meister2021optimizing} introduce a model of manual contact tracing and design provably optimal algorithms. They focus on developing algorithms to mitigate the disease's health detriments after the outbreak stops while we focus on quarantining to minimize the disease spread during the outbreak. Though they have a more realistic model of discovering contacts, we claim our contact tracing model is more realistic since it operates in real time. Furthermore, their model applies to manual contact tracing and remains primarily a theoretical contribution while our model applies to both and additionally yields improved practical heuristics. Finally, we note that \citet{meister2021optimizing} mention the dynamic setting of contact tracing as important future work; our paper takes a first step in addressing this complex problem.

\section{Preliminaries}
\label{sec:prelim}
Recall that the epidemic spreads on $G$ by an SIR-type process; let $I(t)$ be the set of infected nodes and let each node $u\in I(t)$ transmit the disease to each of their neighbors $v$ independently with probability $q_{uv}$. Denote the current set of infected people $I=I(t)$. We assume $I$ is known to the policymaker and will begin to self-isolate at the next timestep, remaining quarantined until recovery. Although all previously infected nodes self-isolate, neighbors of $I$ have been exposed to the disease and may be infected in the next timestep. Let $V_1=N_G(I)-I$ be the first neighborhood of $I$. Because $V_1$ can continue to spread the disease to the rest of the graph, policymakers must contact trace these individuals and ask them to isolate. Since this process is expensive and time-intensive for both contact tracers and quarantined individuals, we denote $B$ to be the budget on the number of nodes which contact tracers can reach. Further complicating the costs of contact tracing, some of the individuals contacted may be noncompliant, refusing to quarantine. For model generality, we assume node $u$ complies with probability $c_u$. Given these parameters and constraints, the objective of policymakers can be formulated as \overall{}, which seeks to minimize the total number of infections in $G$ at the end of the epidemic. \overall{} is a highly idealized problem to solve since the contact graph, transmission rates, and compliance rates are all constantly changing due to various forms of social distancing. As a result, we focus on locally optimal solutions with the objective of minimizing the expected number of infections in the second neighborhood of $I$. We denote this neighborhood as $V_2=N_G(V_1)-I-V_1$ and formalize the problem next.

\textbf{The \prob{} Problem:} Given contact graph $G=(V,E)$, a subset $I\subseteq V$ of infected nodes, compliance probabilities $c_u$ for $u\in V_1$, transmission probabilities $q_{uv}$ for $(u,v)\in E$, and a budget $B$, the objective is to find a subset $Q\subseteq V_1$ satisfying $|Q|\le B$ to quarantine which minimizes the expected number of infections in $V_2$. We let $F(Q)$ denote the objective value of \prob{} given that set $Q$ is asked to quarantine. See Figure \ref{fig:example} for an example.

\begin{figure*}[h]
			\centering
    \begin{minipage}[t]{.35\linewidth}
    \begin{tikzpicture}[scale=0.4]
    			
    \draw [-] [black!80!white, line width=0.2mm] plot [smooth, tension=0] coordinates {(5.5,8) (4,5.5)};	
    \draw [-] [black!80!white, line width=0.2mm] plot [smooth, tension=0] coordinates {(5.5,8) (7,5.5)};	
    \draw [-] [black!80!white, line width=0.2mm] plot [smooth, tension=0] coordinates {(8.5,8) (7,5.5)};	
    \draw [-] [black!80!white, line width=0.2mm] plot [smooth, tension=0] coordinates {(8.5,8) (10.5,5.5)};	
    \draw [-] [black!80!white, line width=0.2mm] plot [smooth, tension=0] coordinates {(4,5.5) (3,3)};
    \draw [-] [black!80!white, line width=0.2mm] plot [smooth, tension=0] coordinates {(7,5.5) (3,3)};
    \draw [-] [black!80!white, line width=0.2mm] plot [smooth, tension=0] coordinates {(10,5.5) (3,3)};
    \draw [-] [black!80!white, line width=0.2mm] plot [smooth, tension=0] coordinates {(7,5.5) (5.5,3)};
    \draw [-] [black!80!white, line width=0.2mm] plot [smooth, tension=0] coordinates {(7,5.5) (8.5,3)};
    \draw [-] [black!80!white, line width=0.2mm] plot [smooth, tension=0] coordinates {(10,5.5) (8.5,3)};
    \draw [-] [black!80!white, line width=0.2mm] plot [smooth, tension=0] coordinates {(10,5.5) (11,3)};
    			
    			   \draw [fill = red!80!white](5.5,8) ellipse (0.5cm and 0.5cm);
    			    \draw [fill = red!80!white](8.5,8) ellipse (0.5cm and 0.5cm);

    			    \draw [fill = red!70!white](4,5.5) ellipse (0.5cm and 0.5cm);
     				\draw [fill = green!70!black](7,5.5) ellipse (0.5cm and 0.5cm);			        		\draw [fill = green!70!black](10,5.5) ellipse (0.5cm and 0.5cm);

    			    \draw [fill = orange!80!white](3,3) ellipse (0.5cm and 0.5cm);
     				\draw [fill = blue!80!white](5.5,3) ellipse (0.5cm and 0.5cm);	
     				\draw [fill = blue!80!white](8.5,3) ellipse (0.5cm and 0.5cm);	
     				\draw [fill = blue!80!white](11,3) ellipse (0.5cm and 0.5cm);

     		\node[text = white] (label) at (5.5,8) {$u_1$};	
    		\node[text = white] (label) at (8.5,8) {$u_2$};	
    		\node[text = white] (label) at (4,5.5) {$u_3$};
    			
    		\node[text = white] (label) at (7,5.5) {$u_4$};
    		
    		\node[text = white] (label) at (10,5.5) {$u_5$};
    		
    		\node[text = white] (label) at (3,3) {$u_6$};
    		\node[text = white] (label) at (5.5,3) {$u_7$};
    		\node[text = white] (label) at (8.5,3) {$u_8$};
    		\node[text = white] (label) at (11,3) {$u_9$};	
    
    		\node[text = black] (label) at (1.5,3) {\large{$V_2$}};	
    		\node[text = black] (label) at (1.5,5.5) {\large{$V_1$}};	
    		\node[text = black] (label) at (1.5,8) {\large{$I$}};	
    
    		\node[text = black] (label) at (12.3,5.5) {Isolated};	
    		\draw[rounded corners,dashed, black!50!white, line width=0.3mm] (6.25, 4.75) rectangle (10.75, 6.25) {};
    
    			\end{tikzpicture}

    			\caption{Example of \prob{} when transmission and compliance rates are 1: set $I=\{u_1, u_2\}$, $V_1=\{u_3, u_4, u_5\}$, $V_2=\{u_6, u_7, u_8, u_9\}$. Suppose $B=2$; then set $Q=\{u_4, u_5\}$ is an optimal isolated set. Node $u_6$ is exposed, and the objective value for this solution is 1.}
    \label{fig:example}

\end{minipage}\hfill
\begin{minipage}[t]{.61\linewidth}
\centering
    \input{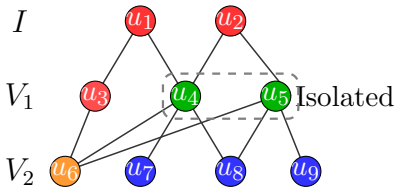}
    \label{fig:example2}
\end{minipage}\hfill

\end{figure*}

In addition to minimizing infections, the policymaker also wants to ensure no demographic group is affected disproportionately by our contact tracing algorithms. In particular, the number of people quarantined in $V_1$ and infected in $V_2$ should be fair with respect to demographic groups. To account for this, we abstract away the attributes of a node $v\in V$ by a label $\ell(v)\in L$. We assume $\bigcup_{\ell\in L}\mathcal{R}_\ell=V$, where $\mathcal{R}_\ell\subseteq V$ denote the set of nodes with label $\ell\in L$. We also assume we are given constraints $B_{\ell}$ for $\ell\in L$ on the number of people in $V_1\cap \mathcal{R}_\ell$ to quarantine and constraints $a_\ell$ for $\ell\in L$ on the expected number of infections in $V_2\cap \mathcal{R}_\ell$. (Note that this is for full generality. One useful example to think about is where the budgets for each demographic group is proportional to their size, i.e., $B_\ell$ is proportional to $|V_1\cap\mathcal{R}_\ell|$ and $a_\ell$ is proportional to $|V_2\cap\mathcal{R}_\ell|$.) We will show how to extend our algorithms to satisfy these constraints while maintaining their utility.

\section{\prob{} is NP-Hard}\label{sec:hard}
\begin{theorem}
Even when all transmission and compliance probabilities are 1 and there are no fairness constraints, \prob{} is NP-Hard.
\end{theorem}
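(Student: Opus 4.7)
The plan is to reduce from the decision version of the Densest $k$-Subgraph problem (DkS), which is NP-Complete via a standard reduction from Clique. Recall DkS: given a graph $H = (U, E_H)$ and integers $k, m'$, decide whether there exists $S \subseteq U$ with $|S| \le k$ such that the induced subgraph $H[S]$ contains at least $m'$ edges.

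From a DkS instance $(H, k, m')$ I would construct a \prob{} instance as follows. Introduce a single infected root $r$, so $I = \{r\}$. Let $V_1$ be a copy of $U$ and add an edge from $r$ to every node in $V_1$; then $V_1$ is exactly the first neighborhood of $I$. For each edge $e = \{u,u'\} \in E_H$, introduce a fresh node $v_e$ together with edges $(u, v_e)$ and $(u', v_e)$, and let $V_2 = \{v_e : e \in E_H\}$. By construction $V_2$ is disjoint from $I \cup V_1$ and every $v_e$ has exactly two neighbors in $V_1$, so $V_2$ is precisely the second neighborhood of $I$. Set $B = k$ and all $q_{uv} = c_u = 1$.

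The key observation is that, since transmission and compliance are deterministic, a node $v_e \in V_2$ is infected unless both of its $V_1$-neighbors lie in $Q$. Hence
\[
F(Q) \;=\; |E_H| - |E_H[Q]|,
\]
where $E_H[Q]$ denotes the edges of $H$ with both endpoints in $Q$. Minimizing $F(Q)$ subject to $|Q| \le B = k$ is therefore equivalent to maximizing $|E_H[Q]|$ over $|Q| \le k$, which is exactly DkS (the maximum is certainly attained at $|Q| = k$, since adding a vertex never decreases the induced edge count). Thus a polynomial-time algorithm for \prob{} would yield one for DkS, establishing NP-Hardness.

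The only point needing care—and the closest thing to an obstacle—is verifying that the constructed graph respects the formal definitions $V_1 = N_G(I)\setminus I$ and $V_2 = N_G(V_1)\setminus(I\cup V_1)$, so that the clean bipartite ``cover'' interpretation used above really is the objective of \prob{}. Both hold by the choice of edges incident to $r$ and by the freshness of the $v_e$'s. Beyond that the reduction is completely mechanical.
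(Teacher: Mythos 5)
Your reduction is correct and uses essentially the same construction as the paper: a single infected root adjacent to a vertex-copy layer $V_1$, with one degree-two gadget node in $V_2$ per edge, so that saving a $V_2$ node requires quarantining both endpoints. The only difference is that you phrase the target as Densest $k$-Subgraph (showing \prob{} on this instance \emph{is} DkS, i.e.\ $F(Q)=|E_H|-|E_H[Q]|$) while the paper specializes to $k$-Clique via the threshold $F(Q^*)=|E|-\binom{k}{2}$; both are valid and rest on the same gadget.
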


\begin{proof}
Consider the Maximum Clique Problem: given a graph $G=(V,E)$ find a subset $S$ of $V$ such for $u,v\in S$ we have $(u,v)\in E$. Maximum clique problem is a well-known NP-hard problem \citep{gareyjohnson}. It is also well-known that the Maximum clique problem can be reduced in polynomial time to the problem of deciding whether $G$ contains a clique of size $k$. We reduce this problem to an instance of $\prob{}$ where the transmission probability along all edges is 1 and compliance probabilities are all 1. 

Define $I$ to be a single node: $I=\{i\}$. For each node in $G$, we add a node in our $\prob{}$ instance and connect it to $i$ (So $V$ is $N(I)-I$ in our $\prob{}$ instance).
Now for each edge $(u,v)\in E$ we add a node and connect it to $u$ and $v$ in $N(I)-I$. Now, consider the $\prob{}$ on this instance with budget $k$. Let $V_1=N(I)-I$ and $V_2 = N(V_1)-V_1-I$. Let $Q^*\subseteq V_1$ be the optimal solution to this instance of $\prob{}$. We claim that $G$ has a clique of size $k$ if and only if $F(Q^*)=|E|-{\binom{k}{2}}$.

First, we show $F(Q^*)= |E|-{\binom{k}{2}}$ whenever $G$ has a clique of size $k$. Let $U\subseteq V$ be a clique of size $k$ in $G$. Then $U$ corresponds to some set $S\in V_1$ (in our $\prob{}$ instance) of size $k$. We have $F(S)= |E|-{\binom{k}{2}}$: let $u,v$ be distinct nodes in $U$. For $(u,v)\in E$, edge $(u,v)$ is exposed in solution $S$ if either $u\notin S$ or $v\notin S$. This implies $F(S)=|E|-{\binom{k}{2}}$. By optimality of $Q^*$, we have  $F(Q^*)\leq |E|-{\binom{k}{2}}$. Clearly, we also have $F(Q^*)\geq |E|-{\binom{k}{2}}$.

Conversely, suppose $F(Q^*)= |E|-{\binom{k}{2}}$. Let $U$ be the set in $V$ corresponding to $Q^*$. We claim that for distinct $u,v\in U$, we have $(u,v)\in E$. Suppose for contradiction that there are $u,v\in U$ (hence in $Q^*$) such that $(u,v)\notin E$. Then, there are less than $\binom{k}{2}$ nodes in $V_2$ that are  covered by $Q^*$. This implies that $F(Q^*)>|E|-{\binom{k}{2}}$, which is a contradiction.
\end{proof}

\section{Approximation Algorithms for \prob{}}\label{ext}
In the previous section, we showed that even when all compliance and transmission probabilities are 1, \prob{} is NP-Hard. As a result, we focus on developing approximation algorithms. For ease of notation in the next sections, we let $p_u=1-\prod_{v\in I:(u,v)\in E}(1-q_{uv})$ be the probability $u\in V_1$ gets infected in the next timestep.

Let $D_v$ denote the number of neighbors in $V_1$ node $v\in V_2$ has and let $D=\max_{v\in V_2}D_v$. We first present a mixed integer linear program (MILP) to formulate \prob{} and show that applying \ALG{} gives a $D$-approximation (i.e., it provides a solution with objective value at most $D$ times optimal). Using insight from the analysis of \ALG{}, we present a simple greedy algorithm, \Greedy{}, which still guarantees a $D$-approximation. Furthermore, \Greedy{} offers better interpretability and is easier to implement under noisy/incomplete information. 

Note that we don't make an assumption of independence in our proofs, which is an advantage of our methods. This means our results apply even when transmissions are correlated (e.g., when the transmission events $v\to u$ and $w\to u$ are positively correlated, for neighbors $v,w\in I$ of $u\in V_1$): we just need to update the formula above for $p_u$ to take correlations into account, in constraint (\ref{eqn:V2geqV1probabilistic}) and later. Such correlations are common due to meetings in groups/crowds, making our methods especially desirable.

\subsection{\ALG{}}
Let $E'=E\cap (V_1\times V_2)$ be the edges which can potentially transmit the disease in the next timestep. We can write \prob{} as an MILP: 
\begin{align}
&\mbox{min}~{\textstyle \sum}_{v\in V_2}~z_v  &\mbox{s.t.}\nonumber \\
&x_u + y_u =  1 &\mbox{ for }& u \in V_1   \nonumber \\
&{\textstyle \sum}_{u\in V_1}~x_u  \leq  B  \label{const:budget}\\
&z_v \geq p_u \cdot [1-c_u \cdot x_u]\cdot q_{uv} &\mbox{ for }& (u,v) \in E' \label{eqn:V2geqV1probabilistic} \\
&x_u, y_u  \in  \{0,1\} &\mbox{ for }& u \in V_1\nonumber\\
&z_v  \in  [0,1] &\mbox{ for }& v \in V_2 \nonumber 
\end{align}

We have $x_u,y_u$ for $u\in V_1$ as indicators representing $u$ being asked to quarantined and $u$ potentially spreading the disease, respectively. We allow at most $B$ nodes to be quarantined, as indicated by Constraint \ref{const:budget}. Note that for $v\in V_2$, we have the following for every $u\in V_1$ with $(u,v)\in E$: the probability that $v$ gets infected is lower bounded by the probability $u$ is infected, $u$ is not selected for quarantine or $u$ does not comply, and $u$ transmits the disease to $v$. Thus $z_v$ for $v\in V_2$ represents a lower bound on the probability on $v$ getting infected, as conveyed through Constraint \ref{eqn:V2geqV1probabilistic}.

\begin{algorithm}[h]
\label{alg:algorithm}
\caption{\ALG{}}
\begin{algorithmic}[1] 
\STATE Relax the integer constraints of the MILP to obtain its LP relaxation 
\STATE Solve the LP to get vectors $x,y\in {\mathbb{R}}^{V_1}$
\STATE Apply dependent rounding as in \citet{dependentrounding} to vector $x$ to obtain $X_u$ for $u\in V_1$
\STATE $Q\leftarrow \{u\in V_1:X_u=1\}$
\end{algorithmic}
\end{algorithm}

Based on this MILP, we give our algorithm for \prob{}. First, we relax the binary vector constraints on $x_u, y_u,$ to get a computationally-feasible linear program (LP). The output of the LP will be vectors $x,y\in \mathbb{R}^{V_1}$ and $z\in \mathbb{R}^{V_2}$ with $x_u,y_u,z_v\in [0,1]$, with objective-function value \emph{at most} as large as our optimal solution. However, $x_u$ may be a fractional value, which does not directly imply a decision for our contact-tracing problem.
\citet{dependentrounding} presented a linear time randomized algorithm which given a vector $x\in [0,1]^n$ with $\sum_{i=1}^{n}x_i\le k$ outputs a vector $X\in \{0,1\}^n$ satisfying:
\begin{description}
    \item[(P1)] For $i=1,\ldots,n$, $\Pr[X_i = 1] = x_i$;
    \item[(P2)] $\sum_{i=1}^{n} X_i \leq k$ with probability one. 
    \item[(P3)] For all $S\subseteq[n]$, we have:
    \begin{align*}
        \textstyle\Pr[\bigwedge_{i\in S}(X_i=0)] &\le \textstyle\prod_{i\in S} \Pr[X_i=0];\\
        \textstyle\Pr[\bigwedge_{i\in S}(X_i=1)] &\le \textstyle\prod_{i\in S} \Pr[X_i=1].
    \end{align*}
\end{description}
We use this to obtain $X\in \{0,1\}^{V_1}$ from vector $x$, giving our final solution of $Q=\{u\in V_1: X_u =1 \}$. We call this algorithm \ALG{} and give its approximation guarantee next:

\begin{theorem}\label{thm:D-apx}
Applying Algorithm 1 to the above MILP yields a $D$-approximation for \prob{}.
\end{theorem}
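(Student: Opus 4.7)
The plan is a two-step LP rounding analysis: first, show that $\text{LP}^* \le \text{OPT}$, where $\text{LP}^*$ is the optimal value of the LP relaxation; second, bound the expected objective value of the rounded set $Q = \{u : X_u = 1\}$ by $D \cdot \text{LP}^*$. Combining these two bounds yields $\mathbb{E}[F(Q)] \le D \cdot \text{OPT}$, i.e., a $D$-approximation in expectation. Feasibility ($|Q| \le B$) comes for free from Property (P2) of \ALG{}, since the LP's budget constraint enforces $\sum_u x_u^* \le B$.

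For the first step, let $Q^\star$ be an optimal integer solution of \prob{}. Setting $x_u = \mathbb{1}[u \in Q^\star]$, $y_u = 1 - x_u$, and $z_v = \max_{u\,:\,(u,v)\in E'} p_u\, q_{uv}\,(1 - c_u\, \mathbb{1}[u \in Q^\star])$ gives a feasible MILP solution. The key observation, which does not need independence of transmissions, is that
$$\Pr\!\left[v \text{ is infected} \mid Q^\star\right] \;\ge\; \max_{u \,:\, (u,v) \in E'} p_u\, q_{uv}\,(1 - c_u\, \mathbb{1}[u \in Q^\star]),$$
because the event that a fixed neighbor $u$ is infected, fails to isolate, and transmits to $v$ is a sub-event of ``$v$ is infected.'' Summing over $v \in V_2$ shows that this feasible MILP solution has value at most $F(Q^\star)=\text{OPT}$, so $\text{LP}^* \le \text{MILP}^* \le \text{OPT}$.

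For the second step, fix an optimal LP solution $(x^*,y^*,z^*)$ and apply a union bound within each $v \in V_2$:
$$\Pr\!\left[v \text{ is infected} \mid Q\right] \;\le\; \sum_{u \,:\, (u,v) \in E'} p_u\, q_{uv}\,(1 - c_u\, \mathbb{1}[u \in Q]).$$
Taking expectation over the randomness of \ALG{} and using Property (P1), namely $\mathbb{E}[\mathbb{1}[u \in Q]] = x_u^*$, the right-hand side becomes $\sum_{u\,:\,(u,v)\in E'} p_u\, q_{uv}\,(1 - c_u\, x_u^*)$. Each summand is at most $z_v^*$ by the MILP constraint, and there are at most $D_v \le D$ of them, so the sum is at most $D\, z_v^*$. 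Summing over $v$ yields $\mathbb{E}[F(Q)] \le D \cdot \text{LP}^* \le D \cdot \text{OPT}$, completing the claim.

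The main obstacle, and the source of the factor $D$, is the gap between the ``max'' that is natural in the MILP (since $z_v$ lower-bounds the true infection probability) and the ``sum'' needed for the union-bound step that analyzes the randomly rounded set. This is precisely the slack that lets us drop any independence assumption on the transmissions, at the cost of a factor of $D$ in the approximation. Note that only properties (P1) and (P2) of \ALG{} are used in the argument; the negative-correlation property (P3) is not needed here and would only help under stronger independence assumptions.
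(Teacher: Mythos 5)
Your proof is correct and follows essentially the same route as the paper's: use (P2) for feasibility, a per-node union bound over the at most $D_v\le D$ neighbors in $V_1$ combined with Constraint (3) to bound the rounded solution by $D\cdot\mathrm{LP}^*$, and the validity of the relaxation to get $\mathrm{LP}^*\le\mathrm{OPT}$. The only difference is presentational: you make explicit (via the sub-event/max argument) the step $\mathrm{LP}^*\le\mathrm{MILP}^*\le F(Q^\star)$, which the paper asserts more informally when introducing the MILP.
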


\begin{proof}
Let vectors $x,y,z$ be the optimal solution the linear program relaxation and let $(X_u\in \{0,1\}: u\in V_1)$ be the output of dependent rounding. Recall that $Q=\{u\in V_1:X_u=1\}.$ By \textbf{(P2)}, we have $|Q|\le B$ with probability one, as desired.

We next analyze what happens to nodes $v\in V_2$. For ease of notation, define $C_u$ to be the random variable that node $u\in V_1$ complies when asked to quarantine and $Q_{uv}$ to be the random variable that node $u\in V_1$ transmits the disease to node $v\in V_2$. We have $\mathbb{E}[C_u]=c_u$ and $\mathbb{E}[Q_{uv}]=p_u\cdot q_{uv}$. The probability $v$ gets infected is equal to the probability there exists a neighbor $u\in V_1$ of $v$ which gets infected, does not get quarantined or gets quarantined and does not comply, and transmits to $v$:
\begin{align*}
    \Pr[\text{$v$ gets infected}]&= \Pr[\text{$\exists u\in V_1:
    (u,v)\in E \wedge [(C_u=0) \vee (X_u=0)]\wedge (Q_{uv}=1)$]}\\
    &\le {\textstyle \sum}_{u:(u,v)\in E'}~ p_u\cdot [(1 - c_u)\cdot x_u + y_u] \cdot q_{uv}\\
    &\le {\textstyle \sum}_{u:(u,v)\in E'}~ z_v\le D_v\cdot z_v.
\end{align*} 
The first inequality holds by the union bound, the second inequality holds by Constraint 4, and the rest hold by definition. Using this, we analyze the \prob{} objective value.
\begin{align*}
    F(Q)&={\textstyle \sum}_{v\in V_2}~ \Pr[\text{$v$ gets infected}]\\
    &\le {\textstyle \sum}_{v\in V_2}~ D_v\cdot z_v\le D\cdot {\textstyle \sum}_{v\in V_2}~ z_v\le D\cdot F(Q^*).
\end{align*}
Thus, our algorithm yields a $D$-approximation for \prob{}.
\end{proof}

\subsection{\Greedy{}}
In the analysis of \ALG{}, we took advantage of the union bound as an upper bound to the \prob{} objective value in order to prove our approximation guarantee. Next, we present a simple greedy algorithm, \Greedy{}, which directly optimizes the upper bound and thus still maintains a $D$-approximation.
Recall that for a quarantine set $Q$, we have 
\begin{align*}
F(Q)&\le \textstyle\sum_{v\in V_2}\sum_{u:(u,v)\in E'}~[(1-c_u)\cdot x_u+y_u]\cdot p_u\cdot q_{uv}\\
&=\textstyle\sum_{v\in V_2}\sum_{u:(u,v)\in E'}~[1-c_u\cdot x_u]\cdot p_u\cdot q_{uv}.
\end{align*}
Ignoring the constant, we see that \emph{minimizing} the upper bound on $F(Q)$ is equivalent to \emph{maximizing}
\begin{align*}
& \textstyle\sum_{v\in V_2}\sum_{u:(u,v)\in E'}~x_u\cdot c_u\cdot p_u\cdot q_{uv}\\
&= \textstyle\sum_{u\in V_1}\sum_{v\in V_2:(u,v)\in E}~x_u\cdot c_u\cdot p_u\cdot q_{uv}\\
&= \textstyle\sum_{u\in V_1} x_u\cdot c_u\cdot p_u\cdot\sum_{v\in V_2:(u,v)\in E}~q_{uv}
\end{align*}
subject to $\textstyle\sum_{u\in V_1}x_u\le B$. Since this is just a knapsack problem, it is clear that \Greedy{} attains the optimal value and thus minimizes the \emph{upper bound} on $F(Q)$. 

\begin{algorithm}[h]
\caption{$\Greedy$}
\label{alg:greedy}
\begin{algorithmic}[1] 
\vspace{2pt}
\STATE $w_u \leftarrow c_u\cdot p_u\cdot \sum_{v\in V_2, (u,v)\in E}~ q_{uv}$ for $u\in V_1$
\vspace{2pt}
\STATE pick $B$ nodes with the highest $w_u$ values in $V_1$ to be in $Q$, breaking ties arbitrarily
\end{algorithmic}
\end{algorithm}

\begin{theorem}
Algorithm 2 gives a $D$-approximation to \prob{}.
\end{theorem}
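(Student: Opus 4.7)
The plan is to chain three inequalities,
\[
F(Q_{\Greedy})\ \le\ \mathrm{UB}(Q_{\Greedy})\ \le\ \mathrm{UB}(Q^*)\ \le\ D\cdot F(Q^*),
\]
where $Q^*$ is an optimal \prob{} solution and
\[
\mathrm{UB}(Q)\ :=\ \sum_{v\in V_2}\sum_{u:(u,v)\in E'}p_u\bigl[1-c_u\,\mathbf{1}[u\in Q]\bigr]q_{uv}
\]
is exactly the union-bound surrogate already implicit in the proof of Theorem~\ref{thm:D-apx} and in the derivation preceding Algorithm~2. The outer two links will come from union bounds; the middle link captures the defining property of the greedy rule.

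The first inequality, $F(Q)\le\mathrm{UB}(Q)$ for every feasible $Q$, is the same union-bound step used inside the proof of Theorem~\ref{thm:D-apx}: for each $v\in V_2$, the event ``$v$ is infected under $Q$'' is contained in the union over neighbors $u\in V_1$ of ``$u$ is infected, $u$ fails to self-isolate under $Q$, and $u$ transmits to $v$,'' which has probability $p_u[1-c_u\mathbf{1}[u\in Q]]q_{uv}$; summing over $v$ gives the bound and no cross-edge independence is required. For the second inequality, after subtracting the $Q$-independent constant $\sum_{v,u}p_u q_{uv}$, minimizing $\mathrm{UB}(Q)$ subject to $|Q|\le B$ is equivalent to maximizing $\sum_{u\in V_1}\mathbf{1}[u\in Q]\,w_u$ subject to $|Q|\le B$, with weights $w_u=c_up_u\sum_{v\in V_2,(u,v)\in E}q_{uv}$ -- precisely those computed by Algorithm~2. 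This is a cardinality-constrained maximum-weight selection whose optimum is attained by picking the $B$ largest $w_u$'s, so $\Greedy{}$ is optimal for it and in particular $\mathrm{UB}(Q_{\Greedy})\le\mathrm{UB}(Q^*)$.

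The third inequality is the step I expect to be the main obstacle, since it is where the factor of $D$ is lost and where an argument specific to the behavior of $Q^*$ is needed. Fix $v\in V_2$. For any neighbor $u\in V_1$, the ``$u$-transmission'' event described above is contained in ``$v$ is infected under $Q^*$,'' so $p_u[1-c_u\mathbf{1}[u\in Q^*]]q_{uv}\le\Pr[v\text{ infected under }Q^*]$; the key point is that this is an inclusion of events, so no independence assumption is needed. Since $v$ has at most $D_v\le D$ neighbors in $V_1$, the inner sum over $u$ is at most $D_v$ times this maximum, hence at most $D\cdot\Pr[v\text{ infected under }Q^*]$. Summing over $v\in V_2$ yields $\mathrm{UB}(Q^*)\le D\cdot F(Q^*)$, which closes the chain and gives the claimed $D$-approximation.
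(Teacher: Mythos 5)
Your proof is correct and follows essentially the same chain as the paper's: $F(Q)\le \mathrm{UB}(Q)\le \mathrm{UB}(\text{benchmark})\le D\cdot F(Q^*)$, using the union bound, the greedy rule's exact optimality for the cardinality-constrained surrogate, and a loss of at most $D_v\le D$ per node of $V_2$. The only cosmetic difference is that the paper routes the final link through the MILP optimum $(x^*,z^*)$ and Constraint~(4) together with the fact that the MILP value lower-bounds the \prob{} optimum, whereas you compare directly to $Q^*$ and use containment of each per-edge transmission event in the infection event of $v$ --- both are valid and rest on the same implicit assumptions.
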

\begin{proof}
Let $x^*_u,y^*_u,z^*_v$ to be the optimal solution to the MILP in Section 5.1. Let $Q$ be the set outputted by \Greedy{}, $x_u=I\{u\in Q\}$ is the indicator for membership in $Q$, and $y_u=1-x_u$. Then we have
\begin{align*}
F(Q)&\le \textstyle\sum_{v\in V_2}\sum_{u:(u,v)\in E'}~[1-c_u\cdot x_u]\cdot p_u\cdot q_{uv}\\
&\le \textstyle\sum_{v\in V_2}\sum_{u:(u,v)\in E'}~[1-c_u\cdot x_u^*]\cdot p_u\cdot q_{uv}\\
&\le \textstyle\sum_{v\in V_2}\sum_{u:(u,v)\in E'}~z_v^*\\
&\le \textstyle\sum_{v\in V_2}D_v\dot z_v^* \le D\cdot \textstyle\sum_{v\in V_2}z_v^* \le D\cdot OPT
\end{align*}
where the first inequality holds by the union bound, the second holds because \Greedy{} optimizes the upper bound, the third holds by Constraint \ref{eqn:V2geqV1probabilistic}, and the remaining hold by definition.
\end{proof}

\section{Extension to Fairness Constraints}
\label{fair}
Recall that we want the following fairness guarantees: for $V_1$, we want the number of quarantined people with label $\ell$ to be at most $B_\ell$ and for $V_2$, we want the number of expected infected people with label $\ell$ to be at most $a_\ell$ (assuming there exists a feasible solution). We can extend both of our algorithms to satisfy the first constraint and we can extend \ALG{} to satisfy the second constraint approximately.

\subsection{Fairness in $V_1$}
Recall that the $\mathcal{R}_\ell$ are demographic groups and assume that $\sum_{\ell\in L}B_\ell=B$. Then we can guarantee that the number of quarantined nodes in $\mathcal{R}_\ell$ is at most some given budget $B_\ell$. We can easily enforce this in our MILP formulations in Section 3.1 by adding the following constraint:
\begin{align}
\textstyle\sum_{u\in \mathcal{R}_\ell} x_u \le B_\ell \text{ for } \ell\in L. \label{const:fair1}
\end{align} 

For \ALG{}, this constraint guarantees fairness for the LP solutions, but the rounded solutions may still violate the constraints. To fix this, we modify step 3 of \ALG{} to rounding the vectors $[x_u: u\in V_1\cap \mathcal{R}_\ell]$ representing each demographic group separately. We call this algorithm Fair \ALG{} and note that by \textbf{(P2)}, we have the fairness guarantee with probability 1. We can similarly we modify step 2 in \Greedy{} to picking $B_\ell$ nodes with highest $w_u$ to be in $Q$, for each $\ell\in L$. We call this algorithm Fair \Greedy{}, and we have the fairness guarantee obviously.

\begin{algorithm}[h]
\caption{Fair \ALG}
\label{alg:fairRound}
\begin{algorithmic}[1] 
\STATE Relax the integrality constraints of the MILP to obtain its LP relaxation 
\STATE Solve the LP to get vectors $x,y\in {\mathbb{R}}^{V_1}$
\STATE Apply dependent rounding as in \citet{dependentrounding} to vector $[x_u: u\in V_1\cap \mathcal{R}_\ell]$ for $\ell\in L$ to obtain $X_u$ for $u\in V_1$
\STATE $Q\leftarrow \{u\in V_1:X_u=1\}$
\end{algorithmic}
\end{algorithm}

\begin{algorithm}[h]
\caption{Fair $\Greedy$}
\label{alg:fairGreedy}
\begin{algorithmic}[1] 
\STATE $w_u \leftarrow c_u \cdot p_u\cdot \sum_{v\in V_2, (u,v)\in E}~q_{uv}$ for $u\in V_1$
\STATE \textbf{for} $\ell \in L$: pick $B_\ell$ nodes with the highest $w_u$ values in $V_1\cap R_{\ell}$ to be in $Q$ (break ties arbitrary)
\end{algorithmic}
\end{algorithm}

\begin{theorem}
Algorithms \ref{alg:fairRound} and \ref{alg:fairGreedy} give a $D$-approximation for \prob{} under fairness constraints on $V_1$.
\end{theorem}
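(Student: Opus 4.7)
The plan is to adapt the two $D$-approximation proofs already given in the paper, exploiting the fact that per-group dependent rounding and per-group greedy are the natural fairness-aware analogues of their non-fair counterparts. In each case I would first verify feasibility with respect to the per-demographic budgets $B_\ell$, then rerun the objective-value bound essentially verbatim; the non-fair arguments leave the $V_2$-side analysis completely untouched.

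For Fair \ALG{} (Algorithm \ref{alg:fairRound}), I would first add Constraint \ref{const:fair1} to the MILP of Section 4.1, so its LP relaxation produces a fractional optimum $x^*$ satisfying $\sum_{u \in V_1 \cap \mathcal{R}_\ell} x_u^* \leq B_\ell$ for every $\ell \in L$. Applying dependent rounding separately to each subvector $(x_u^* : u \in V_1 \cap \mathcal{R}_\ell)$, property (P2) yields $|Q \cap \mathcal{R}_\ell| \leq B_\ell$ deterministically (and summing over $\ell$ recovers $|Q| \leq B$), while property (P1) preserves the marginals $\Pr[X_u = 1] = x_u^*$. Since only these marginals and the per-group cardinality bounds are used to control infections on $V_2$ in the proof of Theorem \ref{thm:D-apx}, the same union-bound chain gives $\mathbb{E}[F(Q)] \leq D \sum_{v \in V_2} z_v^*$, and because the fair LP is a valid relaxation of the fair integer program, $\sum_v z_v^* \leq F(Q^*)$ where $Q^*$ is the integer optimum subject to the fairness constraints.

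For Fair \Greedy{} (Algorithm \ref{alg:fairGreedy}), the plan is to rewrite the union-bound upper bound from Section 4.2 as a constant minus $\sum_{u \in V_1} x_u w_u$ with $w_u = c_u p_u \sum_{v:(u,v)\in E'} q_{uv}$, so that minimizing the upper bound is equivalent to maximizing $\sum_u x_u w_u$ over $x \in \{0,1\}^{V_1}$ subject to $\sum_{u \in \mathcal{R}_\ell} x_u \leq B_\ell$ for every $\ell$. This is a max-weight problem on a partition matroid, which decouples across groups; within each $\mathcal{R}_\ell$ the integer optimum is the top $B_\ell$ elements by $w_u$, which is exactly what Fair \Greedy{} outputs. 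Furthermore, any fractional $x^* \in [0,1]^{V_1}$ satisfying the same partition constraints (in particular the fair LP optimum) is dominated in $w$-weight by the per-group top-$B_\ell$ integer selection, so $\sum_u x_u w_u \geq \sum_u x_u^* w_u$. Substituting into the upper bound and then chaining with Constraint \ref{eqn:V2geqV1probabilistic} exactly as in the non-fair greedy proof gives $F(Q) \leq D \sum_v z_v^* \leq D \cdot F(Q^*)$.

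The main obstacle is the last step for Fair \Greedy{}: showing that its integer selection dominates the fractional fair LP optimum in $w$-weight, group by group. Once this per-group top-$B_\ell$ comparison is established (a standard fractional-knapsack argument applied independently on each $\mathcal{R}_\ell$), both proofs reduce to replays of their non-fair counterparts, with the partition constraints substituting for the single cardinality bound in exactly the places where feasibility and LP-dominance are invoked.
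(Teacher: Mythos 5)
Your proposal is correct and follows exactly the route the paper takes: the paper's proof of this theorem is simply the observation that the arguments of Theorems 2 and 3 carry over verbatim, since dependent rounding's properties (P1)--(P2) hold per group and the greedy upper-bound maximization decouples over the partition $\{\mathcal{R}_\ell\}$. You in fact supply more detail than the paper does, correctly isolating the one step worth checking --- that the per-group top-$B_\ell$ selection dominates the fractional fair LP optimum in $w$-weight.
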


\begin{proof}
The proofs are exactly the same as those of Theorems 1 and 2.
\end{proof}

The above guarantees only apply when demographic groups are disjoint, which is not always the case. To model overlapping demographic groups, we can either allow individuals to be (a) probabilistically assigned to demographic groups or (b) assigned to multiple demographic groups. We note that our results for (a) can also be useful when demographic-group classification is the output of some machine-learning model, and does not only apply to overlapping demographic groups. We also note that both of these extensions also maintain their $D$-approximation guarantee since those proofs only require the linear program optimality and $\textbf{(P1)}$: $\mathbb{E}[X_u]=x_u$.

\textbf{Probabilistic Demographic Groups:} we want to extend our fairness guarantees above to the case where the demographic characteristics are probabilistic. To formalize this, we assume that each person $u\in V_1$ is in demographic group $\ell\in L$ with probability $\ell_u\in[0,1]$. Then we want the constraint
\begin{align}
    \textstyle\sum_{u\in V_1} \ell_u X_u\le B_\ell,
\end{align}
where $X_u$ is the indicator variable for $u$ being asked to quarantine. We claim that by adding this same constraint into the linear program in Section 5.1 (replacing $X_u$ by $x_u$), \ALG{} achieves approximate fairness for $V_1$, as defined below.

\begin{theorem}
Let $\epsilon>0$ and $\ell_u$ for $u\in V_1,\ell\in L$ be given. If for each $\ell\in L$, we have $B_\ell\ge \frac{(2+\epsilon)\ln(|L|/\delta)}{\epsilon^2}$ for some parameter $\delta\in(0,1)$, then \ALG{} guarantees that the probability there exists a fairness constraint broken by more than an $1+\epsilon$ multiplicative factor is at most $\delta$. 
\end{theorem}
\begin{proof}
We begin by noting that the outputs $X_u$ are negatively correlated, as stated in \textbf{(P3)}, so we can invoke the results of \citet{negative1997} to get Chernoff-Hoeffding-like bounds for linear combinations of $X_u$. In particular, we will have the following for each $\ell\in L$.
\begin{align}
    \Pr[\textstyle\sum_{u\in V_1}\ell_u X_u\ge (1+\epsilon)B_\ell]\le \exp(-\epsilon^2B_\ell/(2+\epsilon)).
\end{align}
By the union bound, we have
\begin{align}
    \Pr[\exists\ell\in L:\textstyle\sum_{u\in V_1}\ell_u X_u\ge (1+\epsilon)B_\ell]\le |L|\cdot \exp(-\epsilon^2B_\ell/(2+\epsilon)).
\end{align}
When $B_\ell$ is suitably large as in the theorem statement, this probability is at most $\delta$.
\end{proof}

\textbf{Overlapping Demographic Groups:} another case we want to consider is when demographic groups aren't necessarily disjoint. Here, Fair \DepRound{} is no longer well defined because the vectors which we want to apply dependent rounding to now overlap. To get around this, the idea is to split the demographic groups into $2^{|L|}$ new groups corresponding to the subsets of $L$. These groups are now disjoint, so we can solve the linear program as before and apply dependent rounding separately (and thus independently) to each group. We call this new algorithm Fair \ALG{}$^\prime$ due to lack of creativity.

\begin{theorem}
Fair \ALG{}$^\prime$ gives the following fairness guarantees, even when demographic groups aren't necessarily disjoint:
\begin{enumerate}
    \item the budget constraints are satisfied in expectation: $\mathbb{E}[\sum_{u\in R_\ell}X_u]\le B_\ell$ for each $\ell\in L$.
    \item Let $C_\ell$ denote the number of sets $A\subseteq 2^L$ such that the set of people with label $A$ is nonempty and let $C^*=\max_{\ell}C_\ell$. Then for all $t>0$, the probability that any demographic group's budget is violated by more than an additive $t$ is at most $\delta$, provided that $C^*\le\frac{2t^2}{\ln(|L|/\delta)}$.
\end{enumerate}
\end{theorem}
\begin{proof}
The first part follows directly by \textbf{(P1)} and the linearity of expectation. For the second part, let $X_A$ be the subset of nodes in $V_1$ which have labels $A\subseteq 2^L$. Since $\sum_{u\in X_A}x_u$ is not necessarily integral, \textbf{(P2)} doesn't apply. We use the following generalization proved in \citet{dependentrounding}:
\begin{description}
    \item[(P2$^\prime$)] given a vector $x\in\mathbb{R}^d$ with $S=\sum_{i=1}^{d} x_i$ not necessarily integral, dependent rounding outputs a vector $X\in\{0,1\}^d$ such that $\sum_{i=1}^{d}X_i\in\{\lfloor S\rfloor,\lceil S\rceil\}$.    
\end{description}
In other words, the number of isolations chosen by dependent rounding differs from the budget allocated by the optimal linear program solution by at most 1 in each group $A\subseteq 2^L$. Since rounding is applied independently to the groups, the additive constraint violation can be bounded by Hoeffding's Theorem:
\begin{align}
    \Pr[\textstyle\sum_{u\in R_\ell}X_u-B_\ell\ge t]\le\exp[-2t^2/C_\ell]\le \exp[-2t^2/C^*].
\end{align}
By the union bound
\begin{align}
    \Pr[\exists \ell:\textstyle\sum_{u\in R_\ell}X_u-B_\ell\ge t]\le\exp[-2t^2/C_\ell]\le |L|\exp[-2t^2/C^*].
\end{align}
Thus, if $t$ is sufficiently large as in the theorem statement, this probability is at most $\delta$.
In general, we have that $C^*\le\min\{|V_1|,2^{|L|-1}\}$ but this number can be much smaller in practice.
\end{proof}

\subsection{Fairness in $V_2$}

Suppose $\mathcal{R}_\ell$ are the (not necessarily disjoint) demographic groups. We want the expected number of infections in each $\mathcal{R}_\ell$ to be at most some given $a_\ell$. Adding the following constraint for each $\ell\in[L]$ to the MILP formulation is sufficient to guarantee that the fairness constraint is satisfied approximately:
\[\textstyle\sum_{v\in \mathcal{R}_\ell\cap V_2}\sum_{u\in V_1: (u,v)\in E}~(1-c_u\cdot x_u)\cdot p_u\cdot q_{uv}\le a_\ell.\]

\begin{theorem}
Let $a_\ell$ for $\ell\in L$ and $\epsilon>0$ be given. Define $w_{u\ell}=p_u\sum_{v\in \mathcal{R}_\ell\cap V_2: (u,v)\in E}~q_{uv}$ and $w^* = \max_{u\in V_1, \ell}w_{u\ell}$. If for each $\ell$, we have $a_\ell\ge \frac{(2+\epsilon)w^* \ln(|L|/\delta)}{\epsilon^2}$ for some parameter $\delta \in (0,1)$, then Fair \ALG{} guarantees that the probability that there exists a fairness constraint broken by more than a $1+\epsilon$ multiplicative factor is at most $\delta$.
\end{theorem}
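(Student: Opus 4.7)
The plan is to apply a Chernoff--Hoeffding upper tail bound to each of the $|L|$ fairness constraints separately and then take a union bound. For a fixed $\ell$, let
\[
\tilde Y_\ell = \sum_{u \in V_1} w_{u\ell}(1 - c_u X_u)
\]
be the (random) expected number of infections in $\mathcal{R}_\ell \cap V_2$ conditional on the rounded quarantine set $Q$ produced by Fair \ALG{}; this is exactly the quantity that the added fairness constraint controls. By LP feasibility, $\mathbb{E}[\tilde Y_\ell] = \sum_u w_{u\ell}(1 - c_u x_u) \le a_\ell$, and each summand $W_u^{(\ell)} := w_{u\ell}(1 - c_u X_u)$ lies in $[0, w_{u\ell}] \subseteq [0, w^*]$.

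To control the upper tail, I would use that Fair \ALG{} applies the dependent rounding of \citet{dependentrounding} separately within each block $V_1 \cap \mathcal{R}_\ell$, which is known to output \emph{negatively associated} random variables within each block; independence across blocks then lifts negative association to the full vector $(X_u)_{u \in V_1}$. For any $\lambda > 0$, each factor $e^{\lambda W_u^{(\ell)}}$ is a monotone (decreasing) function of the single coordinate $X_u$, so negative association yields
\[
\mathbb{E}\left[\prod_{u} e^{\lambda W_u^{(\ell)}}\right] \le \prod_{u} \mathbb{E}\left[e^{\lambda W_u^{(\ell)}}\right],
\]
which is the moment generating function inequality the standard Chernoff argument needs. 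Running that argument for $[0, w^*]$-valued summands (rescale by $w^*$) and using $\mathbb{E}[\tilde Y_\ell] \le a_\ell$ gives, for $\epsilon \in [0,1]$,
\[
\Pr\bigl[\tilde Y_\ell \ge (1+\epsilon) a_\ell\bigr] \le \exp\left(-\frac{\epsilon^2 a_\ell}{3 w^*}\right).
\]

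Substituting the hypothesis $a_\ell \ge 3 w^* \ln(|L|/\delta) / \epsilon^2$ bounds the right-hand side by $\delta/|L|$, and a union bound over the $|L|$ fairness constraints finishes the proof. The main obstacle is rigorously justifying the MGF decomposition for a weighted sum of \emph{functions} of the rounded variables rather than a bare sum of Bernoullis; I would handle this via the negative association route above, or, if only property (P3) from the paper is available, by writing $\tilde Y_\ell = \sum_u w_{u\ell} - \sum_u c_u w_{u\ell} X_u$ so that an upper tail on $\tilde Y_\ell$ reduces to a lower tail on the weighted sum $\sum_u c_u w_{u\ell} X_u$ of Bernoullis, to which the Panconesi--Srinivasan weighted Chernoff bound applies (after rescaling by $w^*$) using the $\Pr[\bigwedge_{u\in S}(X_u = 1)] \le \prod_{u\in S} \Pr[X_u = 1]$ half of (P3).
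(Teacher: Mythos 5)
Your proposal is correct and follows essentially the same route as the paper: upper-bound the expected infections in $\mathcal{R}_\ell$ by the weighted sum $\sum_{u} w_{u\ell}(1-c_u X_u)$ via the union bound, invoke negative association of the dependently-rounded variables to justify a weighted Chernoff upper-tail bound with scale $w^*$, and finish with a union bound over the $|L|$ groups. Your added care in lifting negative association across the separately-rounded demographic blocks (via independence between blocks) and in making the mean bound $\mathbb{E}[\tilde Y_\ell]\le a_\ell$ explicit from (P1) and the LP constraint are details the paper leaves implicit, but they do not change the argument.
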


\begin{proof}
The proof is similar to that of Theorem 5. Let $\{X_u\}$ be the binary vector coming from rounding $\{x_u\}$, and let $Y_u=1-X_u$. Let $I_\ell$ denote the expected number of infections in $R_\ell$, given the quarantine set output by the algorithm. First note that
\begin{align*}
I_\ell&\le \textstyle\sum_{v\in \mathcal{R}_\ell}\sum_{u\in V_1: (u,v)\in E}~(1-c_u\cdot X_u)\cdot p_u\cdot q_{uv}\\
&=\textstyle\sum_{u\in V_1}w_{u\ell}\cdot (1-c_u\cdot X_u)
\end{align*}
for each $\ell\in L$, by the union bound. The random variables $X_u$ are negatively associated \citep{DubhashiJR07}, so the random variables $1-c_u\cdot X_u$ are also negatively associated. Hence, by tail bounds \citep{Schmidt1995}, we have
\begin{align*}
\Pr[\textstyle\sum_{u\in V_1} w_{u\ell}\cdot (1-c_u\cdot X_u)\ge (1+\epsilon)\cdot a_\ell]\le \exp(-\epsilon^2a_\ell/(2+\epsilon) w^*)    
\end{align*}
for $\ell\in L$. As a result, we can also claim that $$\Pr[I_\ell\ge (1+\epsilon)a_\ell]\le \exp(-\epsilon^2a_\ell/(2+\epsilon)w^*)$$ 
for $\ell\in L$. Now, by the union bound, we have $$\Pr[\exists \ell\in L: I_\ell \ge (1+\epsilon)a_\ell]\le |L| \cdot \exp(-\epsilon^2a_\ell/(2+\epsilon)w^*).$$ Simple algebra shows that this is at most $\delta$ if $a_{\ell}$ is suitably large as in the theorem statement. 
\end{proof}

\section{Practical Implementation}
\label{prac}

Our MDP formulation of efficient contact tracing and isolation assumes knowledge of the contact graph, transmission rates, and compliance rates. In the real world, however, these values may not be known. While the average transmission rate can be estimated, the compliance rates are difficult to predict and the knowledge of the contact graph is limited (and dependent on the type of contact tracing). In this section, we develop heuristics based on \Greedy{} which can be implemented for digital and manual contact tracing.

\subsection{Digital Contact Tracing}

Many digital contact tracing apps are implemented based on a proximity approach, where devices randomly generate encrypted keys and exchange those keys with devices in their proximity \citep{Abueg2020}. These exchanges are stored locally in each individual's device. When a person tests positive, they can choose to alert all their contacts through the keys from the list. Though there is no direct cost in alerting contacts, quarantining too many people incurs an economic deficit to society so we still need to limit the number of isolations. Hence, we can apply our framework to digital contact tracing.

When apps are implemented using the proximity approach, we can extract necessary quantities to apply \Greedy{}. We assume there is a uniform transmission rate $p$ between contacts and a uniform compliance rate which can be set to 1 without loss of generality. Under these assumptions, \Greedy{} reduces to picking nodes $u$ with highest weight $w_u$, where  
$$w_u = |N(u)\cap V_2|\cdot\big[1-(1-p)^{|N(u)\cap I|}\big].$$ 
To increase interpretability, when $p$ is small as is the case in COVID-19, we can use a first-order approximation to the Binomial expansion to estimate: 
$$w_u\approx p\cdot |N(u)\cap V_2|\cdot |N(u)\cap I|.$$ 
Finally, we add noise from a discrete Gaussian with $\varepsilon=1$ to guarantee edge differential privacy for the contact graph \citep{5360242,DBLP:journals/corr/BunS16,C0S:20} and pick the $B$ nodes with the highest noisy weight. With this scheme, contact tracing apps can easily implement this variant of \Greedy{}, which we denote as Private \Greedy{}.

\subsection{Manual Contact Tracing}
 
Now we turn our attention to manual contact tracing, which proceeds as follows: when a person tests positive for the disease, they are added to a queue of infected people. Contact tracers then arbitrarily pick and interview people from this queue to extract information about their neighbors. Finally, they contact these neighbors and ask them to quarantine. As a result, policymakers choose nodes in $V_1$ to contact without information about $V_2$. Though this restricts the applicability of our results, our algorithms still motivate useful heuristics for contact tracing in the above process. 

Like before, we will assume the policymakers only know the average transmission and compliance rates. Recall from our digital contact tracing analysis that \Greedy{} in the case where all transmission and compliance rates are assumed to be uniform already favors picking nodes with higher degree. In particular, when transmission rates are 1, \Greedy{} is exactly equivalent to picking nodes in $V_1$ with highest degree in $V_2$. We emphasize that although one may claim this is a very intuitive result, our work is the first to motivate this theoretically. 

The importance of selecting high degree nodes motivates a heuristic, which we call \Seg{} (due to how we simulate it in experiments). The idea is to garner additional information during interviews with the infected nodes: when asking for their neighbors, we can also ask them to classify each neighbor into sets of \textit{high} ($\mathcal{H}$) or \textit{low} ($\mathcal{L}$) degree. Then we randomly/arbitrarily pick nodes from the set of high degree nodes $\mathcal{H}$ to contact trace. In our experiments, we simulate \Seg{} by ranking the nodes in $V_1$ by their degree. We define $\mathcal{H}$ to be nodes with degree in the top 25\%; the remaining nodes are in $\mathcal{L}$. In order to represent inaccurate judgement of high/low degrees, we sample $3B/4$ nodes from $\mathcal{H}$ and $B/4$ nodes from $\mathcal{L}$ to be our final quarantine set $Q$

We note that this should be viewed as practical contributions motivated by the simplicity of current manual contact tracing implementations: picking arbitrary nodes from the set of exposed individuals. This restricts the potential effectiveness of manual contact tracing, which we our results and recommendation here can improve. In the practical implementation, we acknowledge the importance of mitigating any personal biases given the subjective nature of this classification process. Such methods may include providing defined classification thresholds and clear category specifications, and is left to the practitioner.

\section{Experiments}\label{exp}

\textbf{Disease model.}
Our setup for the epidemic simulation is modeled loosely based on COVID-19. 
We assume a simple SIR model, with infectious duration of two time steps. At each timestep, we have a susceptible set $(S)$, infected set ($I = I_1 \dot\cup I_2)$, and a recovered set $(R)$. Nodes in $I_1$ got infected this timestep and nodes in $I_2$ have already been infected for one timestep. While both $I_1$ and $I_2$ transmit the disease, all quarantine decisions will be made based on $I_2$ only. This represents how policymakers have incomplete information about the infection status of individuals: $I_1$ is not yet known to be carrying the disease because there is a 4-5 day incubation period and a wait time for COVID-19 testing. By the next timestep, $I_1$ has undergone testing and becomes $I_2$, now known to the policymaker. We note that even though this model is slightly different from the one in our theoretical analysis, our algorithms and problem formulation are still applicable since only $I_2$ is known to the policymaker.


\textbf{Model parameters.}
For each simulation of the MDP, intervention begins at an early timestep with constant budget and continues over the course of the epidemic. Transmission probabilities are set based on the length of contact time and compliance probabilities are set based on the age group of the person in accordance with the relative order presented in \citep{Lou2020} and \citep{Carlucci2020} (see Appendix for details). For digital contact tracing, we set the compliance probabilities to be half that of manual contact tracing. Each quarantine recommendation will instruct the individual to isolate for 2 timesteps. Under this setup, the performances of the intervention algorithms are compared using two different metrics: total number of infections to assess the impacts of an epidemic and the number of known infections at each timestep to maintain a manageable number of cases with respect of hospital resources and infrastructure.

\textbf{Social contact networks.}
We use synthetic social contact networks for two counties in Virginia (summarized in Table~\ref{tab:datasets}) constructed by a first principles approach by \citet{barret-wsc2009} and \citet{eubank-nature}. We enforce fairness constraints and simulate varying compliance rates using the demographic data on age groups given in our social networks (see Table \ref{tab:age_demographics}). Because casual contacts (e.g., during commuting) are not represented in these networks, we augment each network by increasing the degree of each node by about $15$\% \citep{Keeling} and show the robustness of our results by experimenting on these networks as well.

\begin{table}[h]
\centering
\caption{Age group demographic information}
\begin{tabular}{c c c c c c c}
\toprule
Age & Name & Range & Compliance & Montgomery & Albemarle  \\
Group & & (years) &  Rate & Population (\%)& Population (\%)\\[0.25ex]
\midrule
p& pre-school& 0-4 &  0.75 & 5& 3  \\[0.25ex]
s& school-aged& 5-17& 0.80 & 15 & 11 \\[0.25ex]
a& adults & 18-49 &0.60 & 43 & 49 \\[0.25ex]
o& older-adults& 50-64 & 0.85& 21 & 23 \\[0.25ex]
g&golden-aged& 65+ & 0.80 & 16 & 15\\[0.25ex]
\bottomrule
\end{tabular}
\label{tab:age_demographics}
\end{table}

\textbf{Budget.}
For manual contact tracing, we set the budget based on the state of Virginia, which has a population of roughly 8 million people and currently employs around 2000 contact tracers \citep{VDH}. We then estimate the number of contact tracers for our graphs to be proportional to the population. Since each interview with an individual that has contracted COVID-19 takes 30 to 60 minutes \citep{VDH}, a contact tracer can make 4 to 8 isolation suggestions per day (or around 28 to 56 per timestep). We use this information to estimate the budget for the number of isolations. For digital contact tracing, we let the budget range from $0\%$ to $5\%$ of the population in order to understand the tradeoff between economic costs and disease intervention.

\begin{table*}[h]
\centering
\caption{Description of datasets (* indicates the network is augmented)}
\begin{tabular}{cccccc}
\toprule
Network name & $|V|$ & $|E|$& Max degree&\begin{tabular}{@{}c@{}}estimated \# of   \\contact tracers\end{tabular} & Budget \\[0.5ex]
\midrule
Montgomery & 75457 & 648667 & 105 & 18-19 & 500-1000\\[0.5ex]
Montgomery* & 75457 & 768383 & 120 & 18-19 & 500-1000\\[0.5ex]
Albemarle & 131219 & 1423151 & 176 & 32-33 & 900-1800\\[0.5ex]
Albemarle* & 131219 & 1687724 & 205 & 32-33 & 900-1800 \\[0.5ex]
\bottomrule
\end{tabular}
\label{tab:datasets}
\end{table*}

\subsection{Comparison of Methods}
We first compare our practical heuristics and theoretical algorithms against corresponding baselines by running simulations of the MDP with the budget set according to Table \ref{tab:datasets}. To demonstrate the quality of our full information algorithms, we compare it with EC, a baseline which selects the nodes in $V_1$ with the highest eigenvector centrality for quarantine. We chose EC as a baseline since its a centrality measure which uses information from the full network, and we want to see how our local methods compare. Furthermore, EC is related to a heuristic studied for minimizing a graph's spectral radius \citep{Tong2012GellingAM}, which controls the size of the disease spread \citep{wang2003epidemic}. 

\begin{figure}[h]
    \centering
    \includegraphics[scale=0.32]{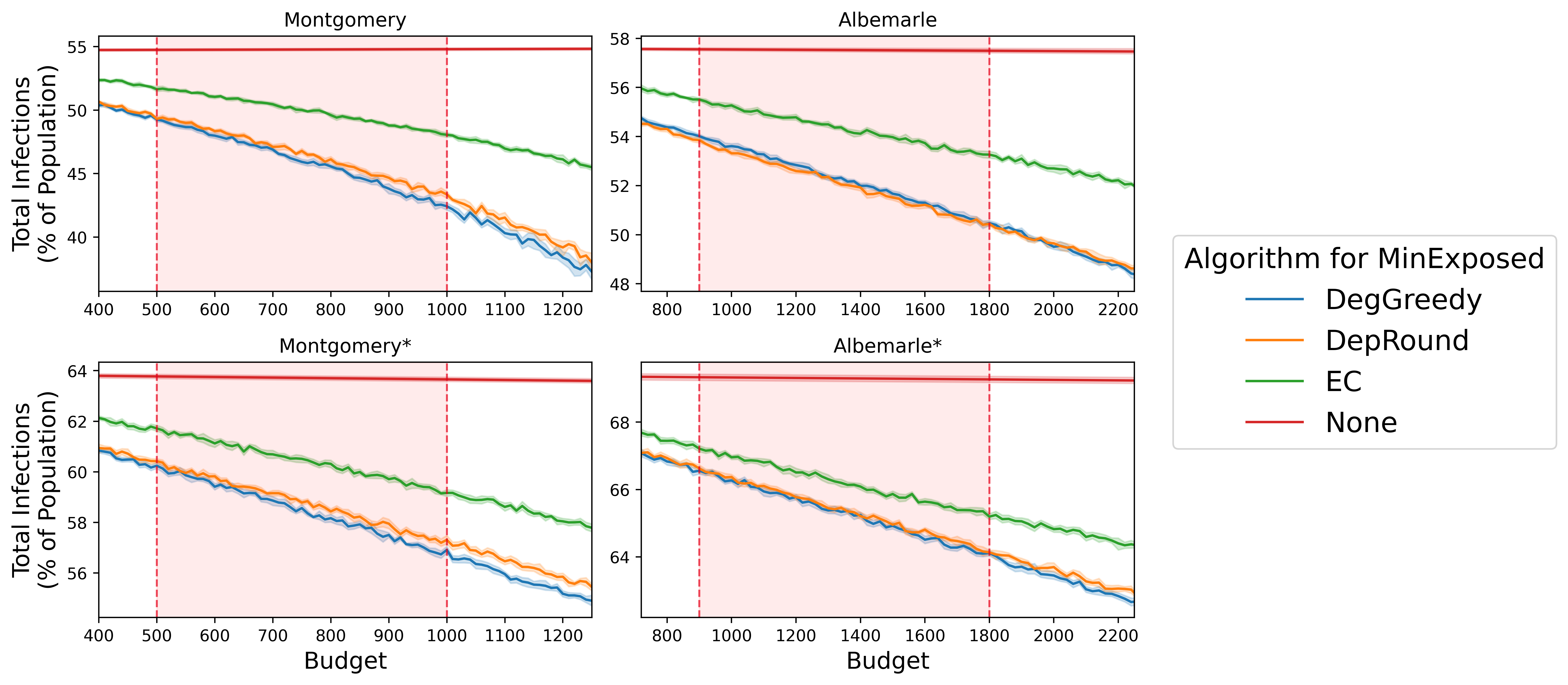}
    \caption{Algorithms for contact tracing under full information (estimated budget is shaded)}
    \label{fig:full}
\end{figure}

Despite requiring more information, EC performs significantly worse than \ALG{} and \Greedy{}, as seen in Figure \ref{fig:full}. Additionally, the sensitivity with respect to budget is about half that of \ALG{} and \Greedy{}. Ultimately, the better performance and stronger sensitivity of our algorithms with respect to budget show that \Greedy{} and \ALG{} may be useful in some places, such as China, where the second neighborhood's information is available.

\begin{figure}[H]
    \centering
    \includegraphics[scale=0.32]{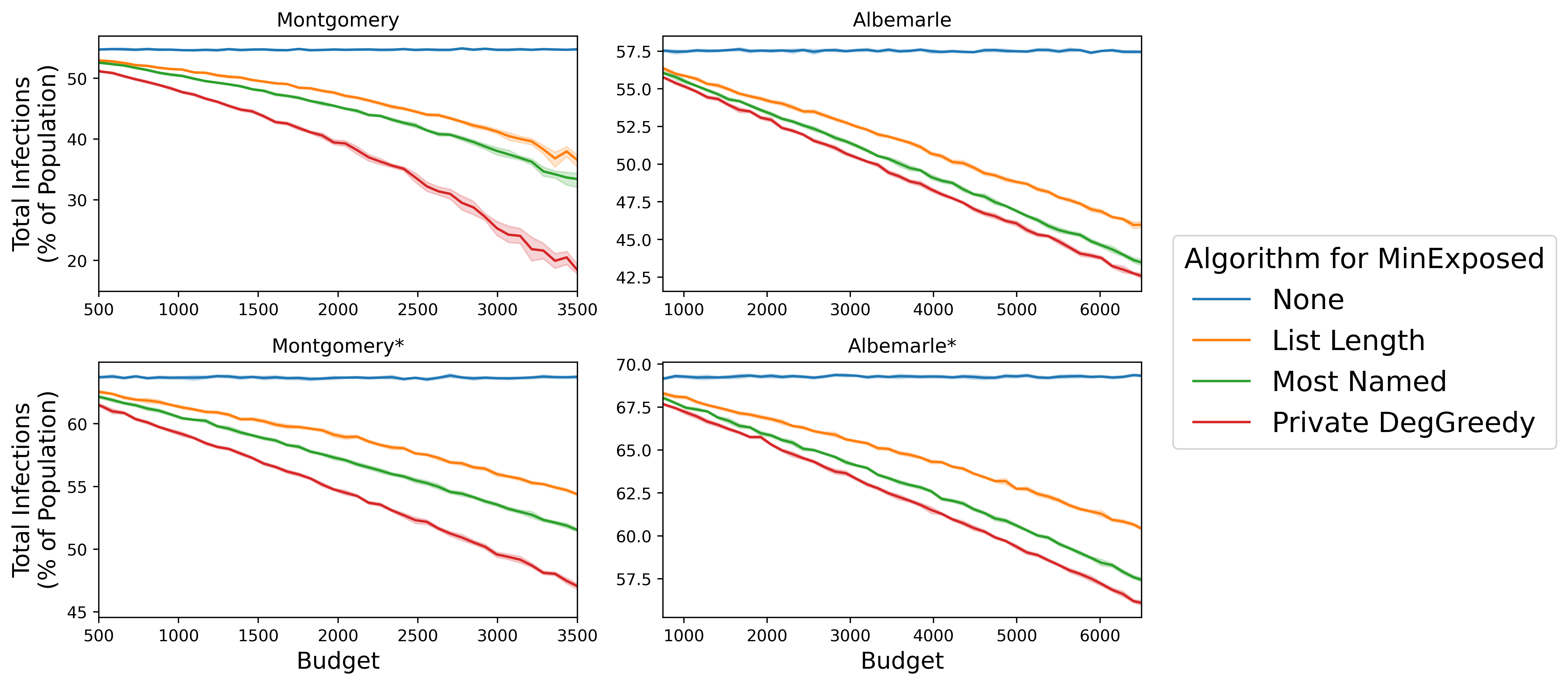}
    \caption{Comparison of digital contact tracing algorithms}
    \label{fig:digital}
\end{figure}

Next, we compare Private \Greedy{} with two intuitive baselines studied in \citet{armbruster2007who}: the MostNamed policy and ListLength policy. The MostNamed policy selects nodes in $V_1$ with the most infected neighbors and the ListLength policy is similar, but weighs each neighbor by the inverse of their degree. From Figure \ref{fig:digital}, we see that our heuristic improves upon the baseline without incurring more privacy loss. Interestingly, the margin of improvement is larger for the Montgomery networks which have higher edge density. This is a result of adding discrete Gaussian noise: the noise added to $w_u$ is $o({w_u})$, so the effect of the noise decreases as absolute degrees increase. We note that performing better in high density networks is a desirable quality here: diseases spread especially fast in such settings, making contact tracing even more crucial.

\begin{figure}[h]
    \centering
    \includegraphics[scale=0.32]{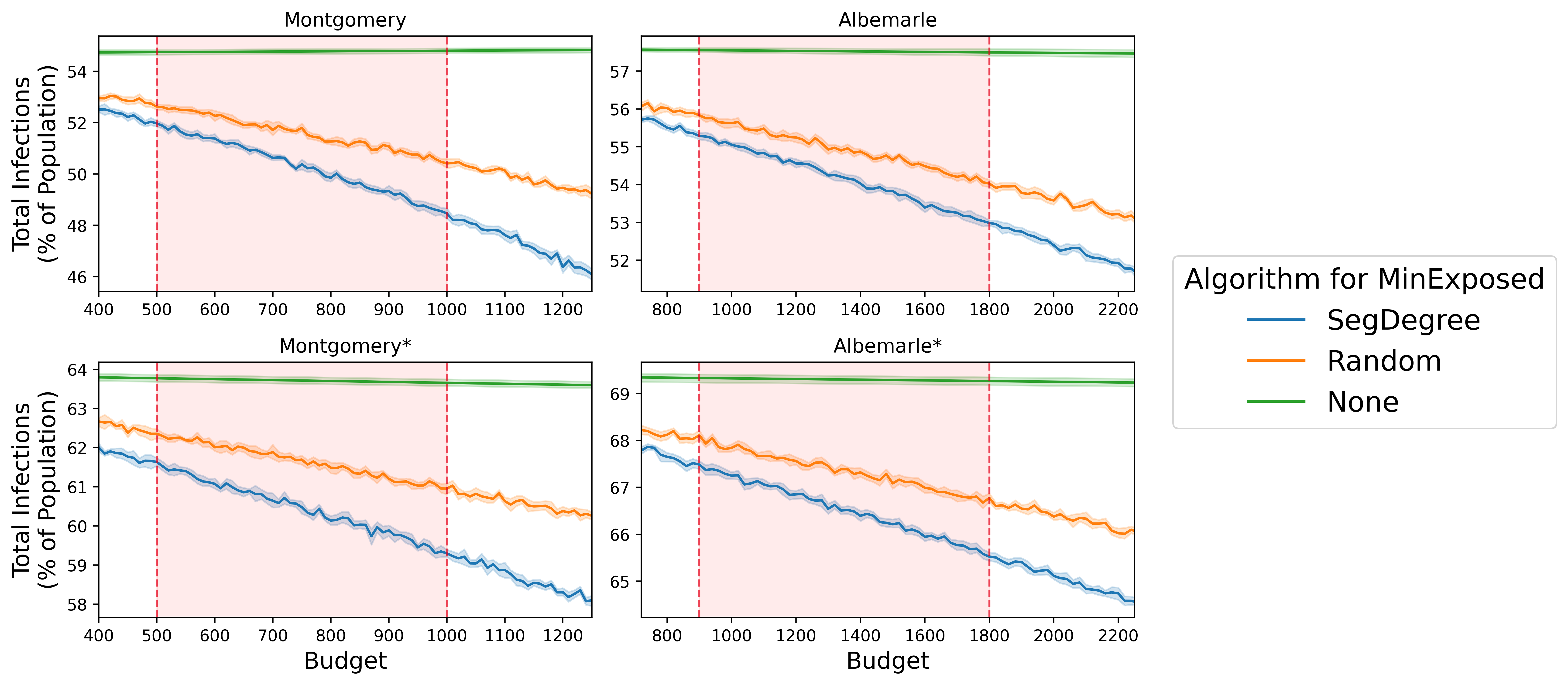}
    \caption{Comparison of manual contact tracing algorithms (estimated budget is shaded)}
    \label{fig:manual}
\end{figure}

Finally, we compare \Seg{} with the baseline adopted by many states in the United States: selecting nodes in $V_1$ at random \citep{NASHP,VDH}. Figure \ref{fig:manual} shows that introducing a simple additional step in manual contact tracing decreases total infections by 50\% more than Random when compared to no intervention. Furthermore, \Seg{} has a larger sensitivity with respect to budget which makes investing in new contact tracers more effective.

\subsection{Visualizing the Epicurve}
In addition to decreasing the total infections, our methods reduce the peak of the curve and shift it to occur at later timesteps (see Figure \ref{fig:epi}). This is important in practice since a later peak enables time for developing of vaccines, which can potentially stop the infection before the peak. As mentioned before, having a lower peak is important as well since hospital capacity is limited; if the peak number of infections is too high, many people are unable to receive adequate treatment.

\begin{figure}[h]
    \centering
    \includegraphics[scale=0.45]{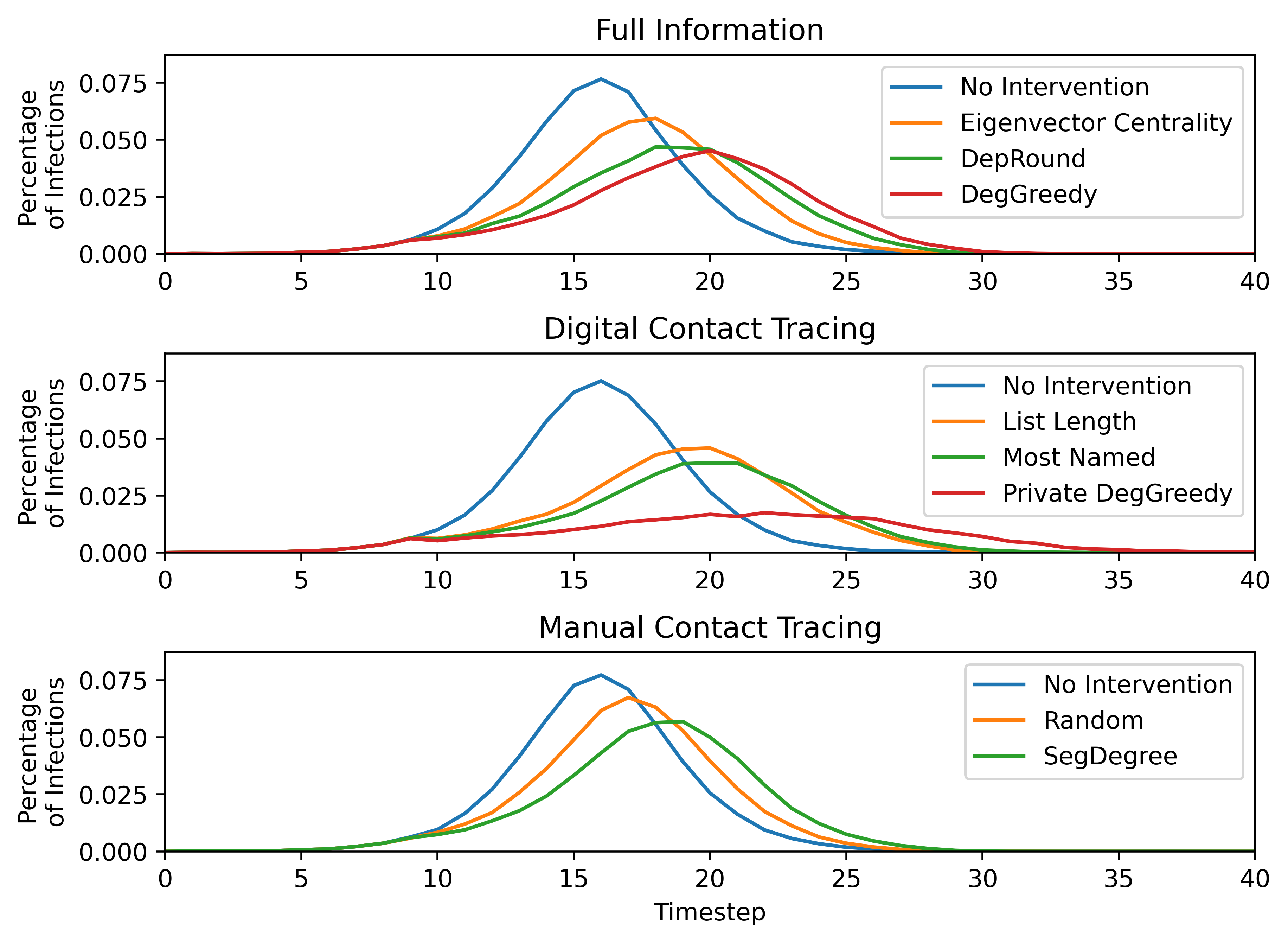}
    \caption{Montgomery Epicurve Visualizations (See Appendix B.1 for epicurves on other networks)}
    \label{fig:epi}
\end{figure}

\subsection{The Price of Fairness}\label{sec:price-fairness}
Due to the economic and social costs of self-isolation, it is important that policymakers ensure demographics are not disproportionately impacted. In these experiments, we focus on age groups and consider four policies: \textbf{(A)} no fairness constraint \textbf{(B)} the budget is proportional to the population of each age group \textbf{(C)} more budget is allocated to the older population \textbf{(D)} less budget is allocated to the working age population (see Appendix for formal definitions). As seen in Table \ref{tab:deg_fair}, Policy A (with no fairness constraints) leads to the lowest total infections, but the differences are not statistically significant. Thus, upholding (reasonable) fairness constraints does not significantly reduce the efficacy of our algorithms.

\begin{table}[h]
    \centering
    \caption{Comparison of Fair \ALG{} and Fair \Greedy{} under different policies}
    \begin{tabular}{l l c c c c c}
        \toprule
         Algorithm & County & Policy & Original & Augmented\\
         \midrule
         \ALG{} & Montgomery & A & 45.31 $\pm$ 0.44 & 57.77 $\pm$ 0.23\\
         & & B & 45.45 $\pm$ 0.40 & 57.82 $\pm$ 0.22\\
         & & C & 45.58 $\pm$ 0.36 & 57.84 $\pm$ 0.21\\
         & & D & 45.47 $\pm$ 0.34 & 57.85 $\pm$ 0.23\\
         & Albermarle & A &  51.42 $\pm$ 0.17 & 64.65 $\pm$ 0.15\\
         & & B & 51.50 $\pm$ 0.21 & 64.74 $\pm$ 0.15\\
         & & C & 51.59 $\pm$ 0.17 & 64.83 $\pm$ 0.14\\
         & & D & 51.51 $\pm$ 0.13 & 64.77 $\pm$ 0.15\\
         \hline
         \Greedy{} & Montgomery & A & 44.72 $\pm$ 0.41 & 57.45 $\pm$ 0.22\\
         & & B & 44.81 $\pm$ 0.41 & 57.54 $\pm$ 0.23\\
         & & C & 44.94 $\pm$ 0.43 & 57.54 $\pm$ 0.20\\
         & & D & 44.85 $\pm$ 0.41 & 57.50 $\pm$ 0.19\\
         & Albemarle & A & 51.69 $\pm$ 0.21 & 64.61 $\pm$ 0.17 \\
         & & B & 51.66 $\pm$ 0.20 & 64.66 $\pm$ 0.19 \\
         & & C & 51.72 $\pm$ 0.18 & 64.70 $\pm$ 0.17 \\
         & & D & 51.71 $\pm$ 0.18 & 64.65 $\pm$ 0.18\\
         \bottomrule
         \\
    \end{tabular}
    \label{tab:deg_fair}
\end{table}

\section{Conclusions}
Here, we formulate the problem of efficient contact tracing as a MDP and each timestep of the MDP as a combinatorial problem, \prob{}. Since \prob{} is NP-Hard, we give an approximation algorithm for it by formulating it as a linear program and performing dependent rounding. Motivated by the analysis of \ALG{}, we devise a greedy algorithm which is more interpretable and extendable to cases where there is a realistic amount of information available. We modify \Greedy{} to be implementable with limited knowledge in both digital and manual contact tracing. Though motivated by our theoretical guarantees, our devised practical heuristics ($i$) do not need network information, ($ii$) do not need disease model information, and ($iii$) only require the approximate degrees of nodes in $V_1$ (and $V_2$ for digital contact tracing). Our heuristics, which are simple and robust, can easily be deployed in practice. We then show that despite the minimal knowledge required, they perform strongly in our experiments. Despite our heuristic, a limitation of our theoretical model is the assumption of contact graph knowledge. A natural next step is to combine our model with that of \citet{meister2021optimizing} to include graph discovery as part of the contact tracing model.

\textbf{Acknowledgements:}
George Li, Aravind Srinivasan, and Zeyu Zhao were supported in part by NSF award number CCF-1918749. Ann Li, Arash Haddadan, Madhav Marathe, and Anil Vullikanti were supported in part by NSF award number CCF-1918656.

\bibliographystyle{named}
\bibliography{references}

\section*{Appendix A: Experimental Details}

\subsection*{A.1: Computational Setup}
We run our simulations on Amazon EC2 c5a.24xlarge instances with 96 vCPUs and and 185GB of RAM. To solve LP and MILP problems, we used Google OR-Tools [Perron and Furnon] with a Gurobi (version 9.1) [Gurobi Optimization, 2021] backend. To simulate the disease spread on our networks, we used Epidemics on Networks [Miller and Ting, 2020]. The full list of software dependencies can be found in our code (https://github.com/gzli929/ContactTracing).

\subsection*{A.2: Experimental Parameters}
We run most of our experiments on 4 networks: Montgomery, Albemarle, Montgomery*, and Albemarle*. The default budget is set as the center of the predicted range. All Montgomery graphs, unless otherwise stated, are run with 750 budget for manual contact tracing and 2700 for digital contact tracing. All Albemarle graphs are run with 1350 budget for manual contact tracing and 4700 for digital contact tracing. The demographic labels and contact duration times for the Montgomery graph are sampled from the distribution of the Albemarle graph.

Contact duration times are transformed into transmission rates by defining an exponential cumulative distribution function such that the average duration is equal to the average transmission. We set the average transmission parameter as 0.05 and held it constant across all our experiments. The compliance rates for each individual follow the age group averages but have added noise from the uniform distribution of [-0.05, 0.05]. Since individuals are less likely to comply to quarantine recommendations from digital apps, we scale the compliance rates for each network to average around 50\% for our digital contact tracing algorithms. We also add discrete Gaussian noise with $\epsilon = 1$ to ensure differential privacy for our digital contact tracing baselines. Unless otherwise stated, we conducted our sensitivity experiments with these default values and plotted the 95\% confidence interval for the average of 10 trials.

In the fairness experiments, the policies are defined formally as follows. We are given an infected set $I$ and total budget $B$. Let $n(l)$ be the number of people in $V_1$ with labels $l$, for labels p, s, a, o, g. Let $n=\sum_{l\in L} n(l)$.
\begin{itemize}
\item \textbf{(A)} no age consideration: there is only one label with budget $B$.
\item \textbf{(B)} the isolation budget is distributed proportional to the population of each age group, i.e. $B_l=B\cdot \frac{n(l)}{n}$ for each $l\in L$.
\item \textbf{(C)} more budget is allocated to the older population (age group g), i.e. $B_l=B\cdot \frac{n(l)}{n+n(g)}$ for $l\neq g$ and $B_g=2B\cdot \frac{n(g)}{n+n(g)}$.
\item \textbf{(D)} less budget is allocated to the working age population (age groups a and o), i.e.  $B_l=B\cdot \frac{n(l)}{n+n(g)+n(p)+n(s)}$ for $l\in\{a,o\}$ and $B_l=2B\cdot \frac{n(l)}{n+n(g)+n(p)+n(s)}$ for $l\in\{p,s,g\}$.
\end{itemize}

\section*{Appendix B: Additional Experiments}

\subsection*{B.1: Epicurve Visualizations}

\begin{figure}[H]
    \begin{minipage}[b]{0.46\textwidth}
        \centering
        \includegraphics[width=\textwidth]{plots/mont_combined_epicurve.png}
        \caption{Epicurve Visualizations for Montgomery}
    \end{minipage}
    \hfill
    \begin{minipage}[b]{0.46\textwidth}
        \centering
        \includegraphics[width=\textwidth]{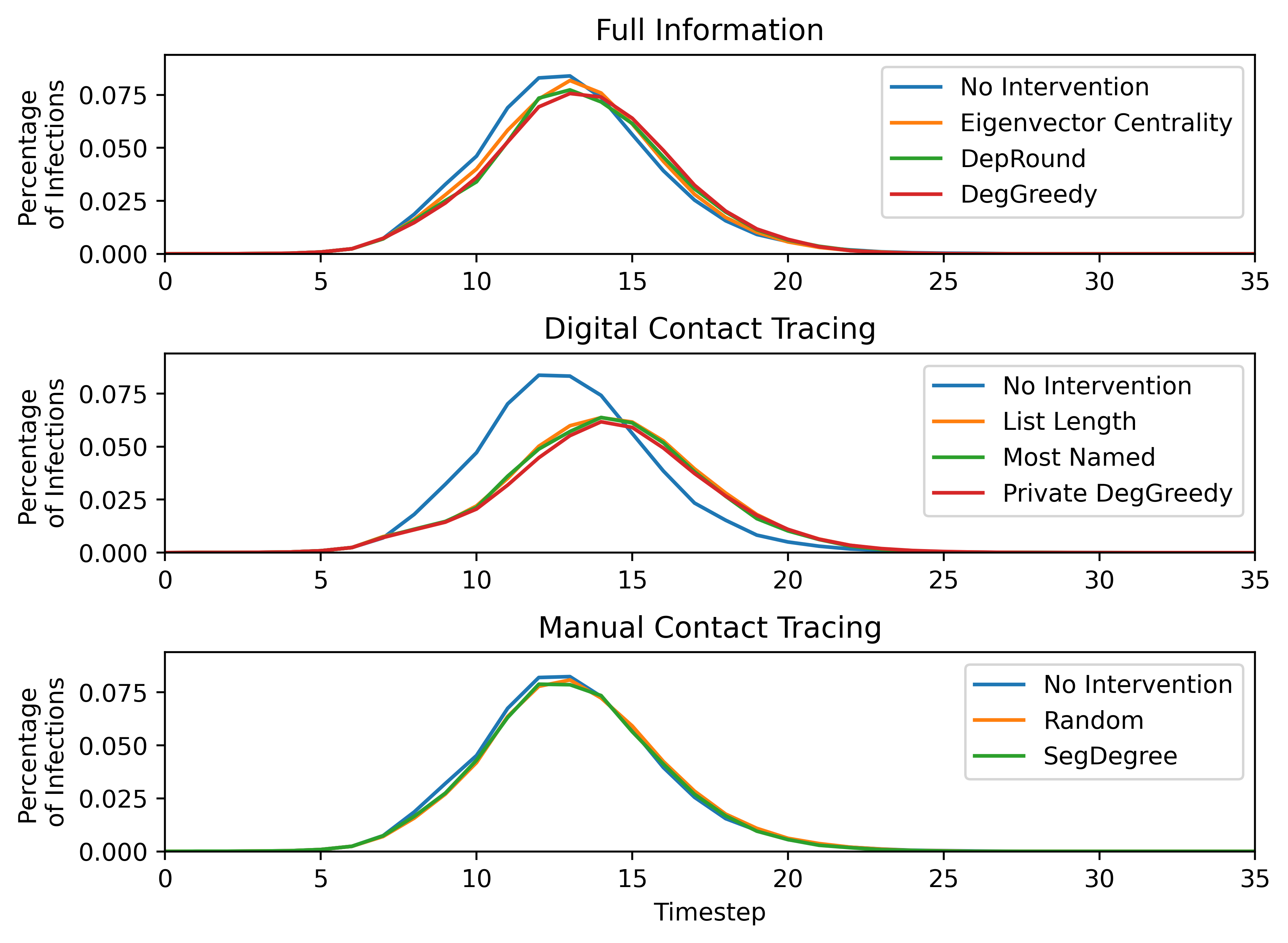}
        \caption{Epicurve Visualizations for Albemarle}
    \end{minipage}
\end{figure}

\begin{figure}[H]
    \begin{minipage}[b]{0.46\textwidth}
        \centering
        \includegraphics[width=\textwidth]{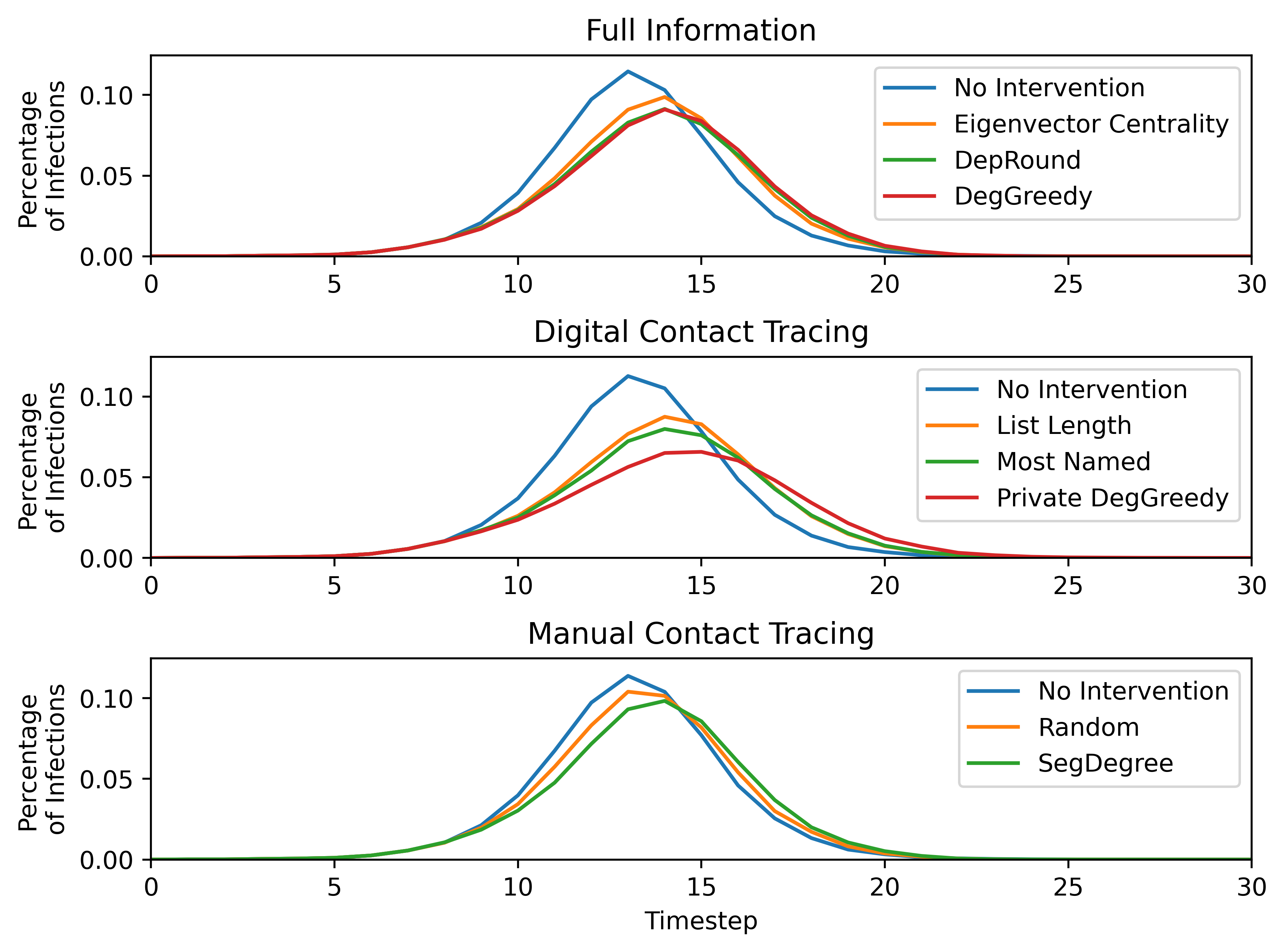}
        \caption{Epicurve Visualizations for Montgomery (augmented)}
    \end{minipage}
    \hfill
    \begin{minipage}[b]{0.46\textwidth}
        \centering
        \includegraphics[width=\textwidth]{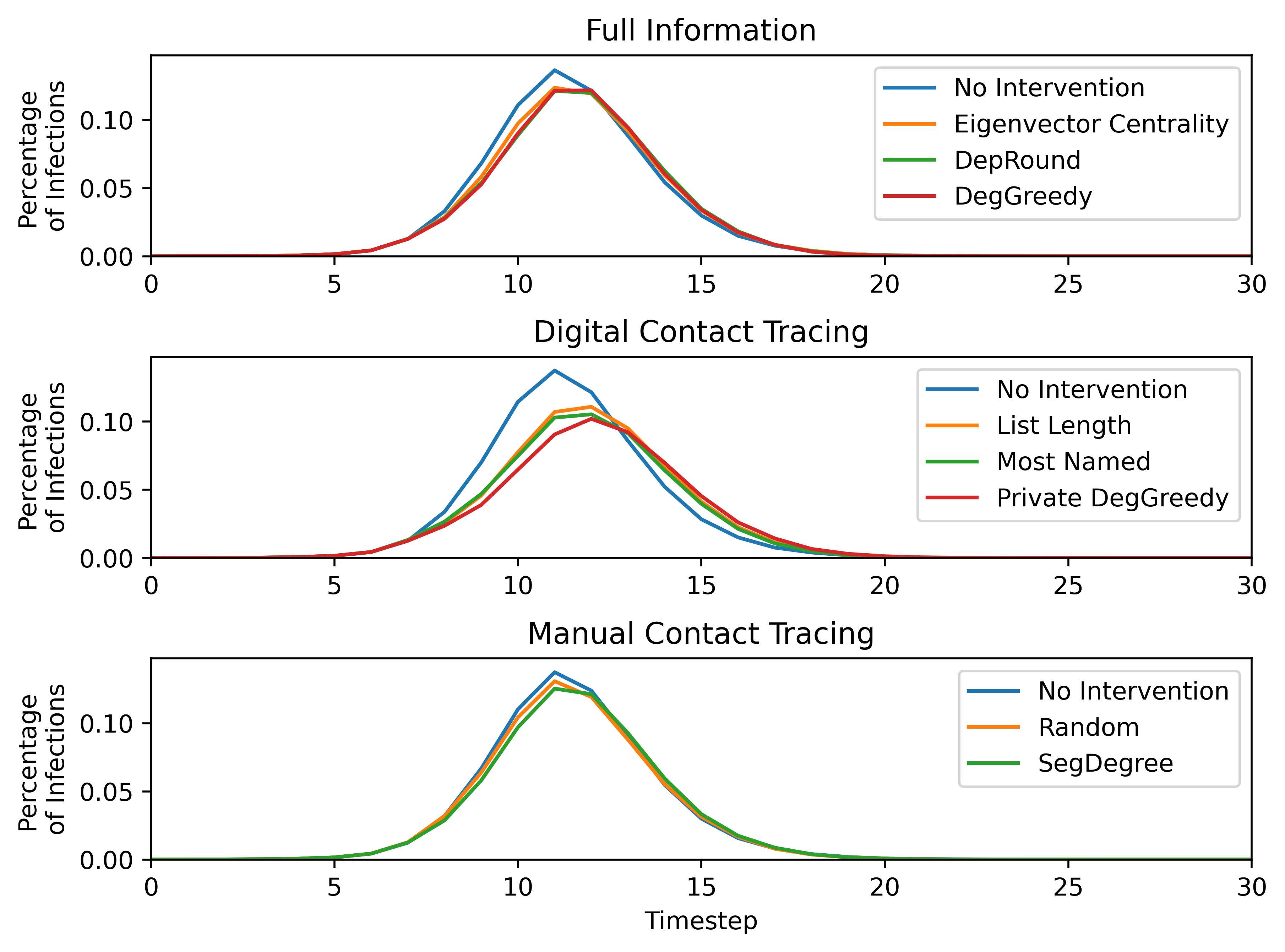}
        \caption{Epicurve Visualizations for Albemarle (augmented)}
    \end{minipage}
\end{figure}

Here, we reproduce the epicurve plots shown in Section 7.2 for the remaining three counties. As seen in the above figures, each of our algorithms reduce and shift the peak of the epicurve in all of the social networks. In particular, Private \Greedy{} consistently performs much better than the baselines on all four social networks. However, this improvement is less obvious when experimenting on Albemarle county. A similar phenomenon was also seen and explained in the main paper: when the degrees are far apart, then there is a larger difference between our algorithms (based on degree) and the baselines. Consider the extreme case where all degrees are equal; then any algorithm based on degree is arbitrary. We believe this is the reason algorithms such as \Seg{} and Private \Greedy{} perform well on Montgomery (where the edge density is very high) and less well on Albemarle (where the edge density is much lower). Additionally, note that compliance rates are relatively low, adding more noise to the equation.

\subsection*{B.2: Peak Infections Comparison}
\begin{figure}[H]
    \begin{minipage}[b]{0.46\textwidth}
        \centering
        \includegraphics[width=\textwidth]{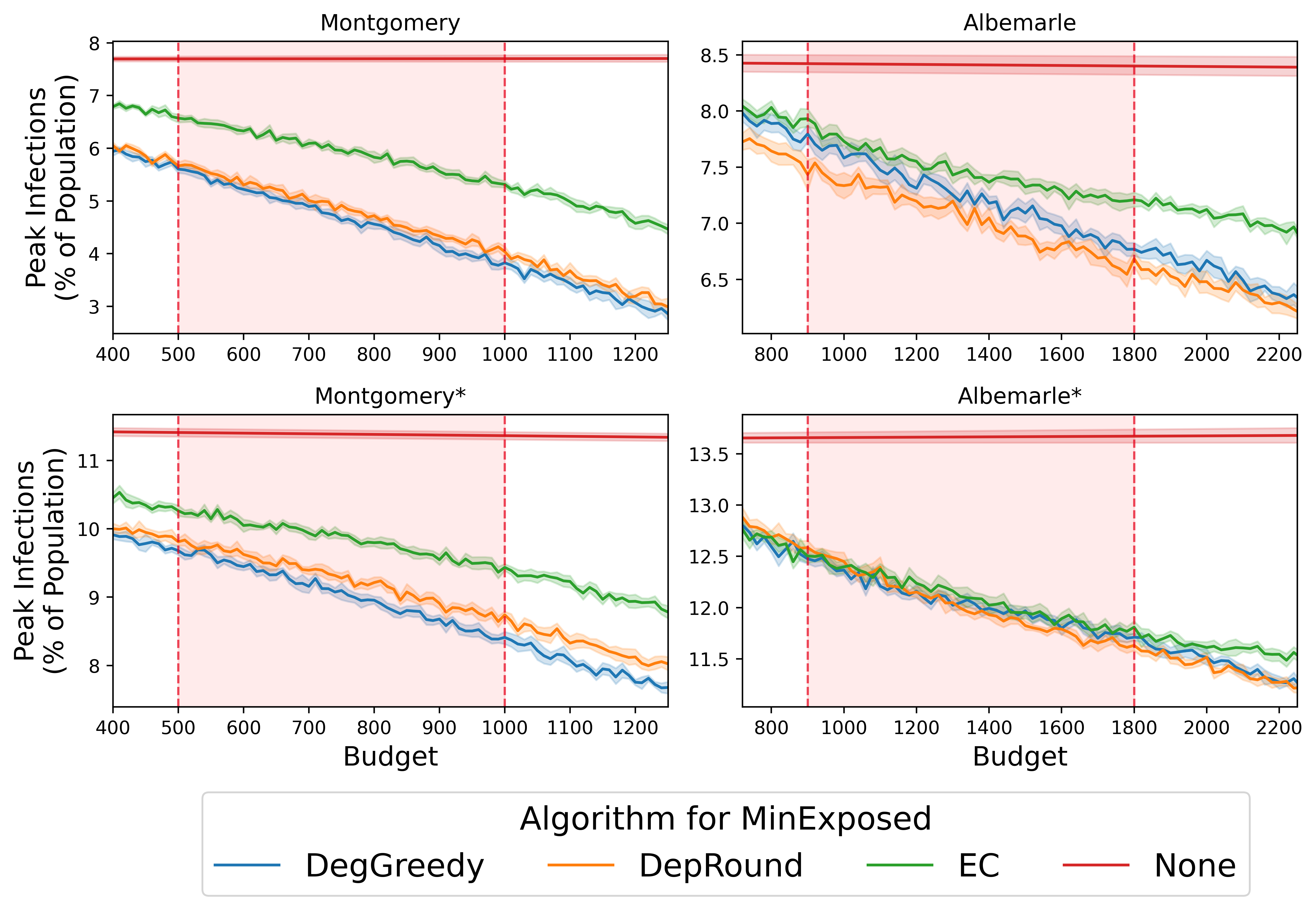}
        \caption{Budget Sensitivity for Peak Infections (Full Information Algorithms)}
    \end{minipage}
    \hfill
    \begin{minipage}[b]{0.46\textwidth}
        \centering
        \includegraphics[width=\textwidth]{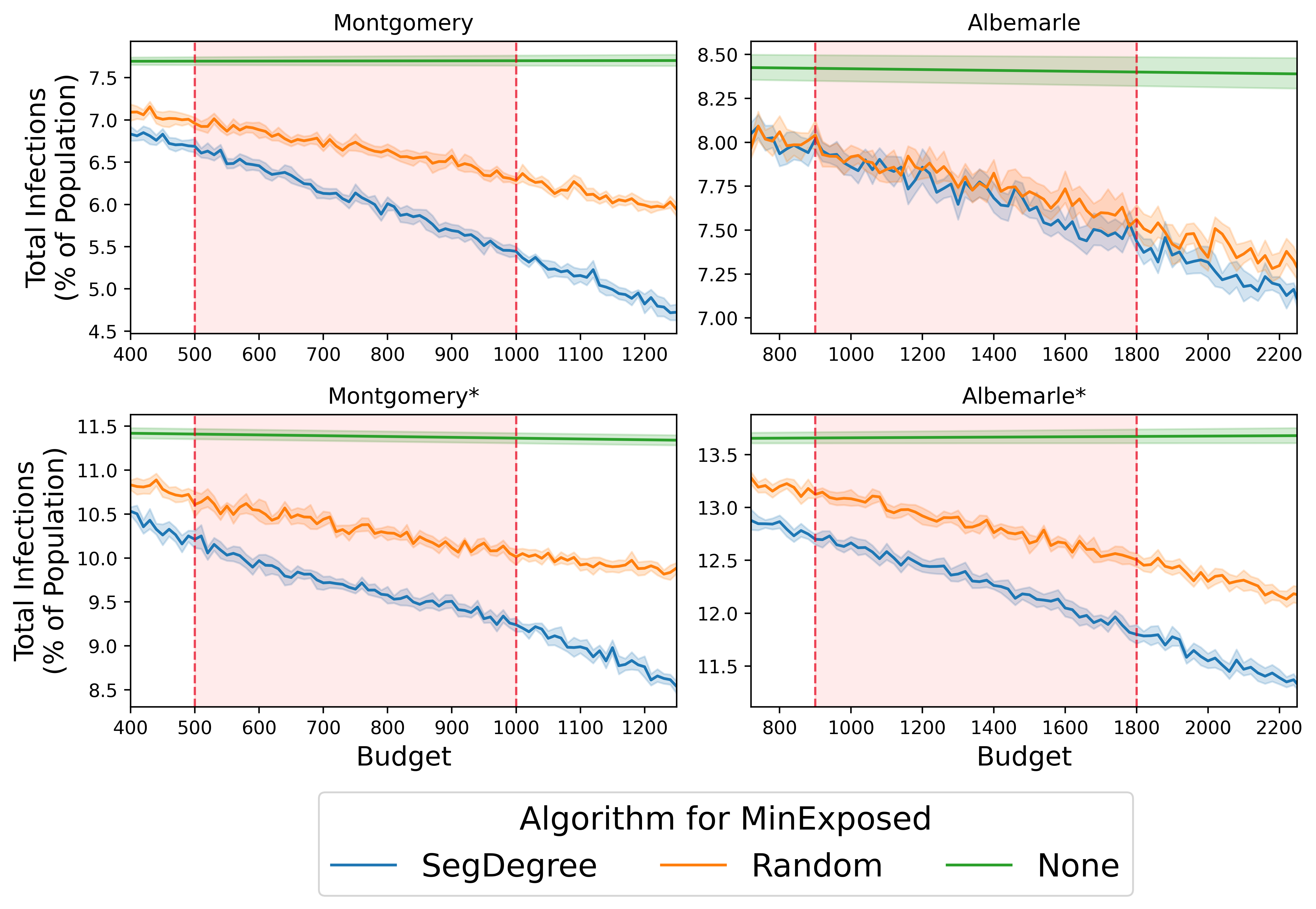}
        \caption{Budget Sensitivity for Peak Infections (Manual Contact Tracing Algorithms)}
    \end{minipage}
\end{figure}

\begin{figure}[H]
   \centering
    \includegraphics[scale = 0.3]{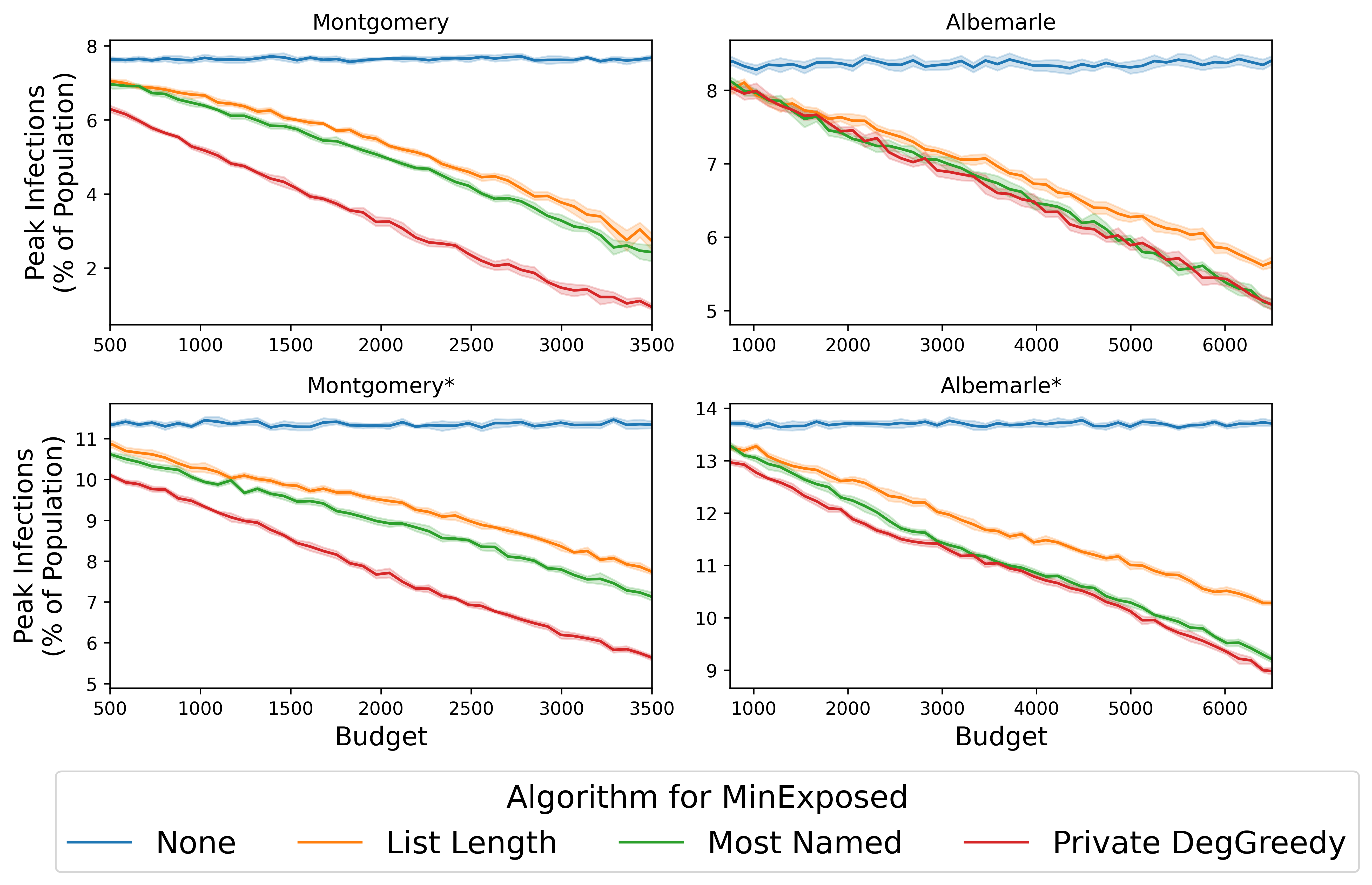}
    \caption{Budget Sensitivity for Peak Infections (Digital Contact Tracing Algorithms)}
\end{figure}

As seen in the sensitivity plots for each of the three contact tracing scenarios, our algorithms decrease the maximum number of people infected during any timestep, which we call the peak. While our algorithms perform similarly under the full information setting, \Greedy{} exhibits stronger sensitivity to budget across all networks. Additionally, the stronger performance of \Greedy{} on Montgomery (particularly with augmentation) suggests it may be especially effective on denser networks. In the setting of manual contact tracing, \Seg{} consistently outperforms the Random baseline and the discrepancy increases as budget increases. In digital contact tracing, our algorithm Private \Greedy{} outperforms MostNamed and ListLength on the Montgomery networks but has a similar performance with MostNamed Albemarle. Even so, Private \DegGreedy{} generally results in a lower peak when the network is augmented, suggesting that it may be more advantageous on denser networks. Across all the networks, Private \DegGreedy{} exhibits stronger budget sensitivity than ListLength when lowering the peak number of infections.

\subsection*{B.3: Empirical Approximation Ratio}

Here, we evaluate the empirical approximation factor of our algorithms and heuristics. We use the MILP optimal objective value to lower bound the true optimal when calculating the ratios.

\begin{table*}[h]
\centering

\begin{tabular}{lllllllll}
\toprule
 & \multicolumn{4}{l}{Albemarle} & \multicolumn{4}{l}{Montgomery} \\
 Bucket &         0 &     1 &      2 &       3 &          0 &     1 &      2 &       3 \\
\midrule
$I~(\times 10^3)$                       &      0.36 &   2.89 &   7.46 &    3.84 &       1.60 &   4.03 &   1.77 &    0.99 \\
$|V_1|~(\times 10^3)$                   &      2.06 &  13.72 &  35.01 &   32.79 &       6.30 &  20.13 &  17.21 &   13.97 \\
$|V_2|(\times 10^3)$                    &      8.97 &  20.40 &  25.82 &   57.52 &       8.19 &  17.24 &  28.88 &   37.93 \\
$|(V_1\times V_2)\cap E|~(\times 10^3)$ &     11.82 &  45.68 &  90.73 &  298.70 &      12.76 &  44.52 &  91.63 &  123.52 \\
$D$                                     &      7.37 &  17.20 &  32.27 &   72.79 &      12.85 &  27.69 &  37.96 &   41.06 \\
\bottomrule
\end{tabular}

\caption{Summary of samples for which we calculate the empirical approximation ratio: Montgomery (490 instances) and Albemarle (461 instances). Samples come from simulating the MDP.}
\label{tb:sizeinfo}
\end{table*}

\begin{table*}[h]
\centering
\begin{tabular}{lllrrrr}
\toprule
\MinExposed{} Algorithms & {} & {} & bucket 0 & bucket 1 & bucket 2 & bucket 3 \\
\midrule
\DegGreedy & Approx. Factor & max &    1.229 &    1.670 &    1.771 &    1.724 \\
        &              & mean &    1.102 &    1.380 &    1.435 &    1.470 \\
        & Time Elapsed & max &    1.887 &    6.654 &    4.172 &    1.768 \\
        &              & mean &    0.865 &    4.270 &    1.525 &    0.666 \\
\DepRound & Approx. Factor & max &    1.362 &    1.796 &    1.915 &    1.871 \\
        &              & mean &    1.169 &    1.479 &    1.631 &    1.663 \\
        & Time Elapsed & max &    5.337 &   18.495 &   14.893 &   27.754 \\
        &              & mean &    1.383 &    7.093 &    7.161 &    8.858 \\
\SegDegree & Approx. Factor & max &    1.777 &    1.918 &    2.039 &    1.994 \\
        &              & mean &    1.484 &    1.656 &    1.762 &    1.793 \\
        & Time Elapsed & max &    0.036 &    0.112 &    0.743 &    0.093 \\
        &              & mean &    0.014 &    0.046 &    0.051 &    0.032 \\
Random & Approx. Factor & max &    2.055 &    2.033 &    2.084 &    2.052 \\
        &              & mean &    1.631 &    1.779 &    1.896 &    1.879 \\
        & Time Elapsed & max &    0.002 &    0.003 &    0.003 &    0.002 \\
        &              & mean &    0.001 &    0.001 &    0.001 &    0.001 \\
\bottomrule
\end{tabular}
\caption{Summary of performance of different algorithms for $\prob{}$ on instances of Montgomery with budget of 750.}
\label{tb:mont_approx}
\end{table*}

\begin{table*}[h]
\centering
\begin{tabular}{lllrrrr}
\toprule
\MinExposed{} Algorithms & & & bucket 0 & bucket 1 & bucket 2 & bucket 3 \\
\midrule
\DegGreedy & Approx. Factor & max &   1.086 &    2.173 &    2.061 &     2.550 \\
        &              & mean &   1.068 &    1.271 &    1.513 &     2.033 \\
        & Time Elapsed & max &   0.106 &   11.560 &   19.914 &    18.356 \\
        &              & mean &   0.101 &    3.861 &    9.996 &     6.607 \\
\DepRound & Approx. Factor & max &   1.129 &    2.173 &    3.091 &     2.803 \\
        &              & mean &   1.091 &    1.321 &    1.638 &     2.182 \\
        & Time Elapsed & max &   6.610 &  153.246 &  296.449 &  1328.052 \\
        &              & mean &   2.231 &   23.178 &   72.142 &   344.155 \\
\SegDegree & Approx. Factor & max &   1.401 &    2.173 &    2.942 &     2.878 \\
        &              & mean &   1.280 &    1.434 &    1.743 &     2.273 \\
        & Time Elapsed & max &   0.008 &    0.093 &    0.184 &     0.167 \\
        &              & mean &   0.005 &    0.030 &    0.084 &     0.077 \\
Random & Approx. Factor & max &   1.491 &    2.173 &    3.176 &     2.937 \\
        &              & mean &   1.301 &    1.467 &    1.771 &     2.318 \\
        & Time Elapsed & max &   0.001 &    0.003 &    0.005 &     0.004 \\
        &              & mean &   0.001 &    0.001 &    0.003 &     0.002 \\
\bottomrule
\end{tabular}
\caption{Summary of performance of different algorithms for $\prob{}$ on instances of Albemarle with budget of 1350.}
\label{tb:cville_approx}
\end{table*}

\end{document}